\documentclass{ifacconf}
\usepackage{graphicx}
\usepackage{natbib}
\usepackage{amsmath}
\usepackage{amssymb}
\usepackage{algorithmicx}
\usepackage{algorithm}
\usepackage{float}
\usepackage{caption}
\usepackage{textcomp}
\usepackage{epstopdf}
\usepackage{enumerate}
\usepackage{xcolor} 
\usepackage{arydshln}
\newtheorem{assumption}{\bf Assumption}
\newtheorem{remark}{\bf Remark}
\newtheorem{lemma}{\bf Lemma}
\newtheorem{theorem}{\bf Theorem}
\newtheorem{definition}{\bf Definition}
\newcommand{\smallsum}[1][]{%
  \mathop{\scalebox{0.7}{$\displaystyle\sum\nolimits_{#1}$}}%
}
\allowdisplaybreaks
\begin{document}
\begin{frontmatter}

\title{Privacy-Preserving Distributed Stochastic Optimization with Homomorphic Encryption and Heterogeneous Stepsizes} 

\thanks[footnoteinfo]{This work was supported in part by the National Natural Science Foundation of China under Grant 62203359, and in part by the China Postdoctoral Science Foundation under Grant 2025M774362.}

\thanks[footnoteinfo]{This is the full version of the paper accepted to the 23rd IFAC World Congress,
Busan, Republic of Korea, August 23-28, 2026. This version includes all proofs omitted from the conference proceedings due to page limitations.}

\author{Haoqiang Zhou, Chi Chen, Yongfeng Zhi, Huan Gao*} 
\address{School of Automation, Northwestern Polytechnical University, Xi'an 710129, China (e-mail: huangao@nwpu.edu.cn).}

\begin{abstract}
Distributed stochastic optimization enables multi-agent collaboration in applications such as distributed learning and sensor networks, but also raises critical privacy concerns due to the involvement of sensitive data. While existing privacy-preserving approaches often face limitations in balancing accuracy with efficiency, we propose a novel distributed stochastic gradient descent algorithm that integrates Paillier homomorphic encryption with heterogeneous and time-varying random stepsizes. The proposed algorithm provides inherent privacy protection against both internal honest-but-curious agents and external eavesdroppers, without relying on any trusted neighbors. Furthermore, we incorporate an attenuation factor to effectively mitigate quantization error induced by the encryption process, ensuring almost sure convergence to the optimal solution while maintaining privacy preservation. Numerical simulations demonstrate the effectiveness and efficiency of the proposed approach.
\end{abstract}

\begin{keyword}
	Distributed stochastic optimization, Paillier homomorphic encryption, Heterogeneous stepsizes, Privacy preservation, Convergence analysis.
\end{keyword}

\end{frontmatter}

\section{Introduction}

Distributed stochastic optimization enables a group of agents to collaboratively optimize a global objective function through local computation and communication, using stochastic (noisy) local gradients. Due to its broad applicability, this field has garnered significant attention in areas such as distributed machine learning \citep{Tsianos2012}, sensor networks \citep{rabbat2004distributed}, and large-scale data analytics \citep{daneshmand2015hybrid}.

The theoretical foundations can be traced to early work on stochastic approximation in centralized frameworks \citep{robbins1951stochastic} and decentralized stochastic gradient descent \citep{tsitsiklis1986distributed}. Subsequent research has evolved along two branches: convex optimization \citep{Nedic2009Multi-agent, nedic2016stochastic} and nonconvex optimization \citep{bianchi2013convergence, Xin2021}. While early gradient descent methods required diminishing stepsizes for exact convergence \citep{yuan2016convergence}, later gradient-tracking techniques enabled faster convergence with constant stepsizes \citep{Pu2018tracking}. Based on the standard assumption of an unbiased gradient oracle, further developments have integrated network topology, adaptive stepsizes, and variance reduction to improve convergence speed and robustness to data heterogeneity \citep{Li2025Homogeneity}. More recent work has been extended to settings with biased gradient oracles \citep{Jiang2025unbiased}.

In distributed stochastic optimization, participating agents explicitly exchange their state variables and/or local gradients to reach the global optimum. However, this introduces significant privacy risks as sensitive information (e.g., medical records and financial transactions) is often inherent in individual agents' local objective functions and their gradients \citep{Wang2023privacy}. Consequently, this explicit information exchange inevitably leads to severe privacy leakage, which is unacceptable in safety-critical scenarios. For example, in distributed machine learning, participating agents only share model parameters and/or local gradients while keeping raw data local, an approach traditionally considered sufficient for protecting data privacy. Nevertheless, recent research works have revealed that the information transmitted among participants is capable of being utilized to recover the initial data in high-quality forms. Typical examples include pixel-perfect restoration for visual data and accurate token-level recovery for textual content \citep{zhu2019deep}.

Given the growing emphasis on privacy, developing privacy-preserving solutions for distributed stochastic optimization has become increasingly urgent. While numerous methods exist, most focus on deterministic gradient settings \citep{gao2023dynamics, Huang2024differential}, with relatively few contributed to stochastic optimization. Early approaches mainly relied on differential privacy \citep{abadi2016deep}. However, persistently injecting independent noise inherently degrades accuracy, leading to a privacy-accuracy trade-off. To circumvent this, \cite{gade2018privacy} leveraged network structure to generate spatially correlated noise for gradient obfuscation. While maintaining exact convergence, the method requires each agent to have enough neighbors free from adversarial collusion, thus reducing privacy protection. Unlike noise-injection-based approaches, \cite{wang2023decentralized} employed heterogeneous stepsizes and manipulated interaction dynamics to obscure gradients, ensuring privacy without sacrificing accuracy. An aggressive quantization based approach was developed in \cite{Wang2023privacy}, which masks the transmitted data and thereby provides privacy protection in distributed stochastic optimization. However, due to the use of homogeneous stepsizes, this approach cannot protect the privacy of an individual agent with only one neighbor or whose neighbors collude with an external eavesdropper. Hardware-based solutions like trusted enclaves \citep{ohrimenko2016oblivious, tramer2018slalom} have also been explored, however, they are not directly suited for preventing multiple data providers from inferring mutual data during stochastic optimization.

Motivated by these limitations and inspired by the cryptographic mechanism in \cite{Gao2019Secure}, we propose a novel privacy-preserving distributed stochastic optimization algorithm for undirected networks based on Paillier homomorphic cryptography and heterogeneous stepsizes. By leveraging partially homomorphic cryptography to embed privacy in pairwise interaction dynamics, our algorithm prevents plaintext leakage from external eavesdroppers capable of intercepting inter-agent communication. Furthermore, the proposed algorithm employs heterogeneous and time-varying random stepsizes, introducing perturbations to obscure the true gradient information. In contrast to many existing approaches \citep{gade2018privacy, zhang2019admm, Wang2023privacy}, our algorithm enables privacy preservation without assuming the existence of trusted neighbors; i.e., privacy is guaranteed even when all neighbors of a particular agent are adversaries colluding with external eavesdroppers. To counteract quantization error introduced by encryption, we incorporate a time-varying attenuation factor that progressively suppresses quantization error across iterations, guaranteeing almost sure convergence to the optimal solution while preserving privacy.

\textbf{Notations:} Let $\mathbb{R}$ and $\mathbb{Z}^+$ denote the sets of real numbers and positive integers, respectively, and let $\mathbb{R}^d$ be the $d$-dimensional Euclidean space. $\mathbf{1}_n$ and $\mathbf{0}_n$ represent the all-ones and all-zeros vectors in $\mathbb{R}^n$. The inner product is denoted by $\langle \cdot, \cdot \rangle$, while $\|\cdot\|$ represents the Euclidean norm for vectors and the spectral norm for matrices. The Frobenius norm is denoted by $\|\cdot\|_F$. $\lfloor \cdot \rfloor$ represents the floor (round down) operation, and $\otimes$ represents the Kronecker product. For a vector $\mathbf{x}$ and a matrix $X$, their $i$-th and $(i, j)$-th entries are denoted by $x_i$ and $x_{ij}$, respectively. $\mathbb{E}[x \, | \, \mathcal{F}]$ is the conditional expectation of random variable $x$ given the $\sigma$-algebra $\mathcal{F}$, and $\mathbb{D}[x]$ denotes $x$'s variance.

\section{Preliminaries}
\subsection{Problem Formulation}

We consider a network of $n$ agents, indexed by the set $\mathcal{V}=\{1, \ldots, n\}$, interacting over an undirected graph $\mathcal{G}=(\mathcal{V}, \mathcal{E})$. $\mathcal{E} \subseteq \mathcal{V} \times \mathcal{V}$ is the edge set representing communication links. For each agent $i \in \mathcal{V}$, its neighbor set is denoted by $\mathcal{N}_i = \{j \in \mathcal{V} \, | \, (i, j) \in \mathcal{E}\}$.

\begin{assumption}\label{ass:graph}
	$\mathcal{G}$ is a connected graph: any two agents $i, j \in \mathcal{V}$ are path-linked.
\end{assumption}
The agents collaboratively solve the following problem:
\begin{equation}\label{Fd}
	\min _{\mathbf{x} \in \mathbb{R}^d} F(\mathbf{x})=\frac{1}{n} \sum_{i=1}^n f_i(\mathbf{x}), 
\end{equation}
where $\mathbf{x} \in \mathbb{R}^d$ is the global optimization variable, and $f_i(\mathbf{x})$ is the local objective function privately accessible to agent $i$. The local function is often defined as the expected risk:
\begin{equation}
	f_i(\mathbf{x}) \triangleq \mathbb{E}_{ \boldsymbol{\eta}_i \sim \mathcal{O}_i}[l_i(\mathbf{x}, \boldsymbol{\eta}_i)], 
\end{equation}
where $l_i: \mathbb{R}^d \times \mathbb{R}^m \rightarrow \mathbb{R}$ denotes the local loss, and $\mathcal{O}_i$ characterizes the distribution of $\boldsymbol{\eta}_i \in \mathbb{R}^m$ (e.g., a data sample).

Since the distribution $\mathcal{O}_i$ is typically unknown and the volume of data is large, agent $i$ approximates the expected risk by maintaining a local copy of the global variable $\mathbf{x}_i \in \mathbb{R}^d$ and computing a stochastic (noisy) gradient. We denote the stochastic gradient estimator of $\nabla f_i(\mathbf{x}_i)$ at iteration $k$ as $\mathbf{g}_i^k(\mathbf{x}_i, \boldsymbol{\eta}_i)$, or simply $\mathbf{g}_i^k$. 

\begin{assumption}\label{ass:Ff}
	The global function $F$ is convex, and each $f_i$ has $L$-Lipschitz continuous gradients: $\|\nabla f_i(\mathbf{x})-\nabla f_i(\mathbf{y})\| \leq L\|\mathbf{x}-\mathbf{y}\|, \forall \mathbf{x}, \mathbf{y} \in \mathbb{R}^d$. Moreover, the stochastic gradient $\mathbf{g}_i^k$ satisfies $\mathbb{E}[\mathbf{g}_i^k]=\nabla f_i(\mathbf{x}_i)$ and $\mathbb{E}[\|\mathbf{g}_i^k-\nabla f_i(\mathbf{x}_i)\|^2] \leq \sigma_g^2$.
\end{assumption}

\subsection{Paillier Homomorphic Encryption}

The Paillier cryptosystem \citep{Paillier1999} consists of key generation, encryption, and decryption. Each agent $i$ holds a public key $k_{pi}$ and a private key $k_{si}$. Let $p$ be a plaintext and $q$ its ciphertext, with encryption $q = E_i(p)$ and decryption $p = D_i(q)$. Paillier supports additive homomorphism: $E_i(p_1)E_i(p_2) = E_i(p_1 + p_2)$.

Since Paillier encryption operates on binary strings, data must be quantized before transmission. Consider agent $i$'s state vector $\mathbf{x}_i = [x_{i1}, \ldots, x_{id}]^T \in \mathbb{R}^d$. Quantization is performed element-wise as:
\begin{equation}\label{Qq}
	Q(\mathbf{x}_i)= [Q( x_{i1}), \ldots, Q(x_{id})]^T, 
\end{equation}
\begin{equation}\label{eq:Q}
	Q(x_{il})= \begin{cases}
		\left\lfloor \frac{x_{il}}{\delta} \right\rfloor +1 \quad \text{with probability} \, P = \frac{x_{il}} {\delta} - \left\lfloor \frac{x_{il}}{\delta} \right\rfloor, \\
		\left\lfloor \frac{x_{il}}{\delta} \right\rfloor \quad \text{with probability} \, P = 1 - \frac{x_{il}}{\delta} + \left\lfloor \frac{x_{il}}{\delta} \right\rfloor, 
	\end{cases}
\end{equation}
where $l = 1, \ldots, d$, $\delta$ is a sufficiently small positive constant representing the quantization precision, and $\lfloor \cdot \rfloor$ denotes the floor (round down) operation. The inversely quantized value is denoted by $\tilde{x}_{il} \triangleq \delta Q(x_{il})$, satisfying
\begin{equation}\label{eq_inverse_quan}
	\mathbb{E}[\tilde{x}_{il}]=x_{il}, \quad \mathbb{D}[\tilde{x}_{il}] \leq\frac{1}{4}\delta^2, 
\end{equation}
which shows that the quantization is unbiased and has bounded variance.

\section{The Proposed Algorithm and Performance Analyses}
\subsection{Privacy-Preserving Algorithm Design}

To motivate our design, we first discuss the privacy vulnerability of conventional distributed stochastic gradient descent algorithm. As presented in \cite{Nedic2009Multi-agent}, its update rule is:
\begin{equation}\label{eq:SGD}
	\mathbf{x}_i^{k+1} = \mathbf{x}_i^k + \sum_{j \in \mathcal{N}_i} w_{ij}( \mathbf{x}_j^k - \mathbf{x}_i^k ) - \alpha^k \mathbf{g}_i^k, 
\end{equation}
where $\mathbf{x}_i^k \in \mathbb{R}^d$ is the state of agent $i$ at iteration $k$, $\alpha^k$ is the stepsize, and $w_{ij}>0$ is the coupling weight between agent $i$ and its neighbor $j$. The publicly transmitted states and homogeneous stepsize $\alpha^k$ allow an attacker to easily infer the gradient $\mathbf{g}_i^k$, leading to significant privacy leakage.

To address this vulnerability, we adopt the encryption-based mechanism in \cite{Gao2019Secure} and employ Paillier homomorphic encryption to secure information exchange. The protection mechanism begins with decomposing the coupling weight $w_{ij}$ as $w_{ij} \triangleq w_{i \rightarrow j} w_{j \rightarrow i}$, where $w_{i \rightarrow j}$ and $w_{j \rightarrow i}$ are independent values that privately chosen by agent $i$ and agent $j$, respectively. Specifically, each decomposed coupling weight is randomly chosen from the set $\{ k \delta \, | \, k \in \mathbb{Z}^+, \, 1\leq k \leq \lfloor \frac{1}{\delta} \rfloor \}$, ensuring it is an integer multiple of the quantization precision $\delta$. Under this construction, the inverse quantization results of the decomposed coupling weights exactly recover the original values with probability $1$ (i.e., $\tilde{w}_{i \rightarrow j} = w_{i \rightarrow j}$). This makes the quantization process deterministic (with zero variance), a property that is crucial for ensuring the bounded variance of the exchanged states in the subsequent analysis.

Before transmission, the interaction term undergoes a two-step process: quantization via \eqref{Qq} and \eqref{eq:Q} followed by Paillier encryption. The detailed exchange protocol between neighboring agents $i$ and $j$  is depicted in Fig. \ref{Paillier encrypt}.

\begin{figure}[h]
	\centering
	\includegraphics[width=1\columnwidth]{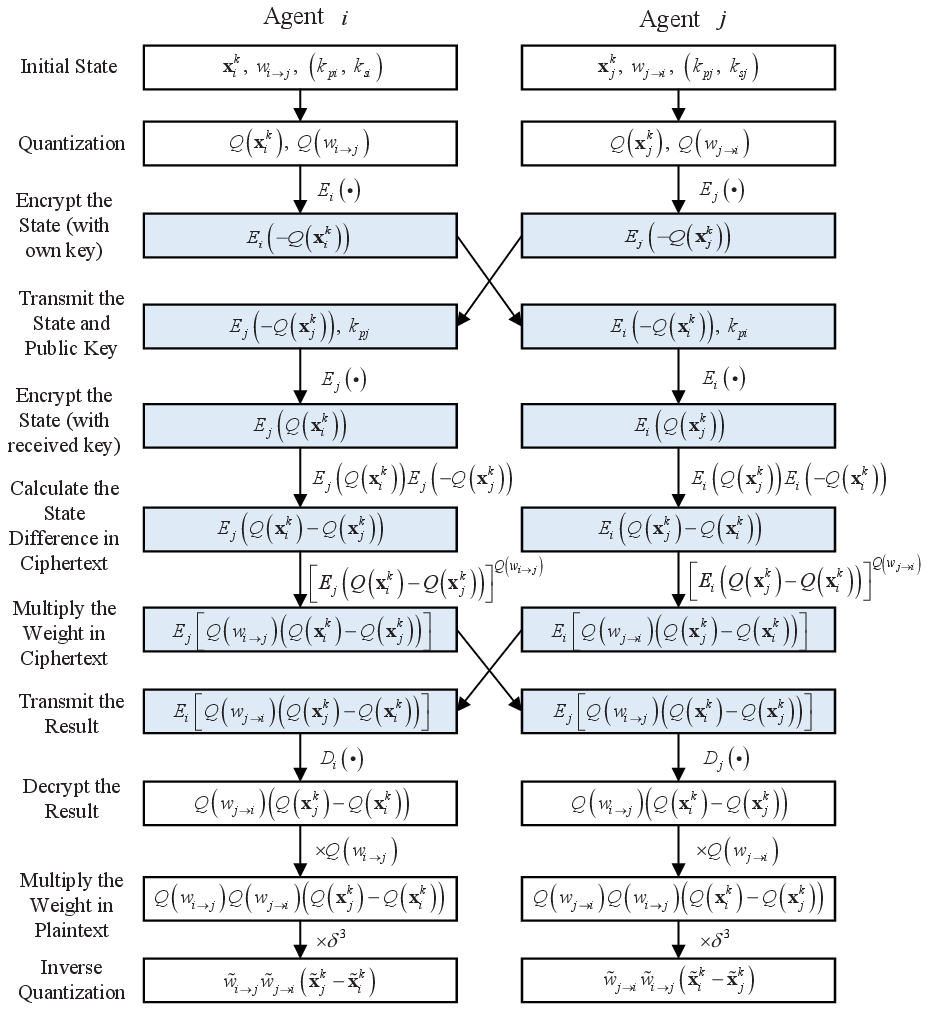}
	\caption{Paillier-based secure interaction protocol.}
	\label{Paillier encrypt}
\end{figure}

After receiving the encrypted data from neighbor $j$, agent $i$ decrypts and dequantizes it to obtain $\tilde{w}_{i \rightarrow j} \tilde{w}_{j \rightarrow i} ( \tilde{\mathbf{x}}_j^k-\tilde{\mathbf{x}}_i^k )$. Since quantization introduces errors between $w_{i \rightarrow j} w_{j \rightarrow i}$ $(\mathbf{x}_j^k - \mathbf{x}_i^k)$ and $\tilde{w}_{i \rightarrow j} \tilde{w}_{j \rightarrow i}(\tilde{\mathbf{x}}_j^k - \tilde{\mathbf{x}}_i^k)$, we introduce an attenuation factor $\gamma^k$ to suppress error accumulation and ensure convergence. Furthermore, each agent $i$ uses a private diagonal stepsize matrix $\Lambda _i^k \triangleq \operatorname{diag}\{ \lambda _{i1}^k, \ldots , \lambda _{id}^k \}$ to mask its true gradient $\mathbf{g}_i^k$, enhancing privacy.

\begin{assumption}\label{ass:step}
	For each agent $i\in\mathcal{V}$, its stepsize matrix $\Lambda_i^k$ satisfies $\mathbb{E}[\Lambda_i^k]=\lambda_i^k I_d$ and $\mathbb{D}[\Lambda_i^k]=(\sigma_i^k)^2 I_d$.
\end{assumption}

The proposed algorithm is summarized in Algorithm \ref{alg}.

\begin{algorithm}[h]
	\caption{The proposed privacy-preserving distributed stochastic gradient descent algorithm}
	\label{alg}
	\begin{algorithmic}
		\State \textbf{Parameter:} Attenuation factor $\gamma ^k > 0$.
		\State \textbf{Initialization:} Each agent $i$ randomly chooses coupling weights $w_{i \rightarrow j}$ for all $j \in \mathcal{N}_i$, and initializes $\mathbf{x}_i^0 \in \mathbb{R}^d$. Then it generates public key $k_{pi}$ and private key $k_{si}$, and sends $k_{pi}$ to all neighbors $j \in \mathcal{N}_i$.
		\State \textbf{For each iteration } $k=0, 1, \ldots$, \textbf{do}
		\State \quad (1) Agent $i$ generates the random stepsize $\Lambda _i^k \triangleq \operatorname{diag}\{ \lambda_{i1}^k, \ldots, \lambda_{id}^k \}$ following Assumption \ref{ass:step}.
		\State \quad (2) Agent $i$ executes the Paillier encrypted interaction protocol with its neighbor $j \in \mathcal{N}_i$ to obtain $\tilde{w}_{i \rightarrow j}\tilde{w}_{j \rightarrow i}(\tilde{\mathbf{x}}_j^k-\tilde{\mathbf{x}}_i^k)$.
		\State \quad (3) Agent $i$ updates its state via:
		\begin{equation}\label{x}
			\mathbf{x}_i^{k+1}=\mathbf{x}_i^k+{{\gamma }^k}\sum\limits_{j\in {{\mathcal{N}}_i}}{\tilde{w}_{i\to j}\tilde{w}_{j\to i}( \tilde{\mathbf{x}}_j^k-\tilde{\mathbf{x}}_i^k )}-\Lambda _i^k\mathbf{g}_i^k.
		\end{equation}
	\end{algorithmic}
\end{algorithm}

For each agent $i$, define the accumulated quantization error at iteration $k$ as $\boldsymbol{\xi}_i^k \triangleq \sum\nolimits_{j\in {{\mathcal{N}}_i}} \tilde{w}_{i \rightarrow j} \tilde{w}_{j \rightarrow i} ( \tilde{\mathbf{x}}_j^k - \tilde{\mathbf{x}}_i^k ) - \sum\nolimits_{j \in \mathcal{N}_i} w_{i \rightarrow j} w_{j \rightarrow i} ( \mathbf{x}_j^k - \mathbf{x}_i^k )$, then the update rule in \eqref{x} can be rewritten as
\begin{equation*}\label{xi}
	\mathbf{x}_i^{k+1} = \mathbf{x}_i^k + \gamma^k \sum_{j \in \mathcal{N}_i} w_{i \rightarrow j} w_{j \rightarrow i} ( \mathbf{x}_j^k - \mathbf{x}_i^k ) + \gamma^k \boldsymbol{\xi}_i^k - \Lambda_i^k \mathbf{g}_i^k.
\end{equation*}
From $\mathbb{E}[\tilde{w}_{i \rightarrow j} \tilde{w}_{j \rightarrow i} (\tilde{\mathbf{x}}_j^k-\tilde{\mathbf{x}}_i^k)] = w_{i \rightarrow j} w_{j \rightarrow i} (\mathbf{x}_j^k-\mathbf{x}_i^k)$, we have $\mathbb{E}[\boldsymbol{\xi}_i^k] = \mathbf{0}_d$. Let $\mathbf{x}_i^k = [x_{i1}^k, \ldots, x_{id}^k]^T$. Since $\tilde{w}_{i \rightarrow j}=w_{i \rightarrow j} \in (0, \, 1]$, for any $l \in {1, \ldots, d}$, from \eqref{eq_inverse_quan} we have
\begin{equation*}
	\begin{aligned}
		 & \mathbb{E}\big[ \big(\tilde{w}_{i \rightarrow j} \tilde{w}_{j \rightarrow i} (\tilde{x}_{jl}^k - \tilde{x}_{il}^k) - w_{i \rightarrow j} w_{j \rightarrow i} (x_{jl}^k - x_{il}^k) \big)^2 \big]\\
		\leq & \mathbb{E}\big[\big(\tilde{x}_{jl}^k - x_{j l}^k)^2\big]+\mathbb{E}\big[\big(\tilde{x}_{il}^k-x_{i l}^k)^2\big]\leq\frac{1}{2}\delta^2,
	\end{aligned}
\end{equation*}
where we used the fact that the quantization of $x_{il}^k$ is independent of $x_{jl}^k$. Therefore, $\mathbb{E}\big[ \| \boldsymbol{\xi}_i^k \|^2 \big]$ is bounded by:
\begin{equation*}\label{Exi}
	\begin{aligned}
		  \mathbb{E}[\|\boldsymbol{\xi}_i^k\|^2] 
		=&  \mathbb{E}[\| \smallsum\nolimits_{j \in \mathcal{N}_i} ( w_{i \rightarrow j} w_{j \rightarrow i} (\tilde{\mathbf{x}}_j^k - \tilde{\mathbf{x}}_i^k - \mathbf{x}_j^k + \mathbf{x}_i^k)) \|^2] \\
		\leq&  \mathbb{E} [ \mathcal{N}_i^2 \smallsum\nolimits_{l=1}^d ( w_{i \rightarrow j} w_{j \rightarrow i} (\tilde{x}_{j l}^k - \tilde{x}_{i l}^k - x_{j l}^k + x_{i l}^k) )^2 ] \\
		\leq& \frac{1}{2}(n-1)^2d\delta^2.
	\end{aligned}
\end{equation*}
The network coupling is characterized by a symmetric weight matrix $W=[w_{ij}] \in \mathbb{R}^{n\times n}$:
\begin{equation}\label{eq:W}
	w_{ij} = 
	\begin{cases}
		w_{i \rightarrow j} w_{j \rightarrow i}, & \text{if } j \in \mathcal{N}_i, \\
		- \sum\nolimits_{l \in \mathcal{N}_i} w_{i \rightarrow l} w_{l \rightarrow i}, & \text{if } i = j, \\
		0, & \text{if } j \notin \mathcal{N}_i \cup \{i\}.		
	\end{cases}
\end{equation}

By defining the concatenated vectors and matrices:
\begin{equation*}
	\begin{aligned}
		\Lambda^k & =\operatorname{diag}\{\Lambda_1^k, \ldots, \Lambda_n^k\} \in \mathbb{R}^{n d \times n d}, \\
		\mathbf{x}^{k} & =\big[(\mathbf{x}_1^k)^T, \ldots, (\mathbf{x}_n^k)^T\big]^T \in \mathbb{R}^{n d \times 1}, \\
		\boldsymbol{\xi}^k & =\big[(\boldsymbol{\xi}_1^k)^T, \ldots, (\boldsymbol{\xi}_n^k)^T\big]^T \in \mathbb{R}^{n d \times 1}, \\
		\mathbf{g}^k & =\big[(\mathbf{g}_1^k)^T, \ldots, (\mathbf{g}_n^k)^T\big]^T \in \mathbb{R}^{n d \times 1}, 
	\end{aligned}
\end{equation*}
we obtain the compact form of the algorithm:
\begin{equation}\label{bfx}
	\mathbf{x}^{k+1}=(I_{n d}+\gamma^k W \otimes I_d) \mathbf{x}^k+\gamma^k \boldsymbol{\xi}^k-\Lambda^k \mathbf{g}^k.
\end{equation}

\begin{remark}
	As shown in \eqref{bfx}, the attenuation factor $\gamma^k$ serves to suppress the quantization error induced by the encryption process, thereby maintaining algorithm convergence throughout the optimization process.
\end{remark}

\subsection{Convergence Analysis}
In order to demonstrate convergence of the algorithm, we begin by presenting several preliminary lemmas.

\begin{lemma}\label{lemma:W}
	Under Assumption \ref{ass:graph}, the eigenvalues of the symmetric coupling weight matrix $W$ defined in \eqref{eq:W} can be ordered as $0 = \rho_1 > \rho_2 \geq \cdots \geq \rho_n$. Define $\bar{W}^k = I_n + \gamma^k W - \frac{1}{n} \mathbf{1}_n \mathbf{1}_n^T$, where $\gamma^k > 0$ is the attenuation factor. Then, there is some positive integer $K$ such that for all $k \ge K$ satisfying $\gamma^k \le 1 / |\rho_n|$, the spectral norm obeys
	\begin{equation}\label{eq:barW}
		\|\bar{W}^k\| = 1 - \mu\gamma^k \in [0, \, 1), 
	\end{equation}
	where $\mu = |\rho_2| > 0$.
\end{lemma}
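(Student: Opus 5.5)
The plan is to show first that $W$ is, up to sign, a weighted graph Laplacian, and then to diagonalize $\bar W^k$ in an orthonormal eigenbasis of $W$. By \eqref{eq:W}, $W$ is symmetric with nonnegative off-diagonal entries $w_{i\rightarrow j}w_{j\rightarrow i}>0$ exactly on the edges, and zero row sums. Hence $-W$ is the Laplacian of the connected weighted graph $\mathcal{G}$. The quadratic form is
\[
\mathbf{v}^T W\mathbf{v}=-\tfrac12\sum_{(i,j)\in\mathcal{E}}w_{ij}(v_i-v_j)^2\le 0,
\]
so $W$ is negative semidefinite and $W\mathbf{1}_n=\mathbf{0}_n$. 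Under Assumption \ref{ass:graph}, the form vanishes only when $v_i=v_j$ across every edge, and connectivity then forces $\mathbf{v}\in\operatorname{span}\{\mathbf{1}_n\}$. This gives the ordering $0=\rho_1>\rho_2\ge\cdots\ge\rho_n$, with $\mu=|\rho_2|>0$.

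Next I would take an orthonormal eigenbasis $\mathbf{u}_1=\mathbf{1}_n/\sqrt n,\mathbf{u}_2,\dots,\mathbf{u}_n$ of $W$. Since $\mathbf{u}_l\perp\mathbf{1}_n$ for $l\ge2$, the matrix $\frac1n\mathbf{1}_n\mathbf{1}_n^T$ is the projector onto $\mathbf{u}_1$. Therefore $\bar W^k$ is symmetric and has the same eigenvectors. Its eigenvalues are $1+\gamma^k\cdot 0-1=0$ on $\mathbf{u}_1$, and $1+\gamma^k\rho_l=1-\gamma^k|\rho_l|$ on $\mathbf{u}_l$ for $l\ge 2$.

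For a symmetric matrix the spectral norm equals the largest absolute eigenvalue, so
\[
\|\bar W^k\|=\max\{0,\ \max_{l\ge2}|1-\gamma^k|\rho_l||\}.
\]
The hypothesis $\gamma^k\le 1/|\rho_n|$ gives $0\le 1-\gamma^k|\rho_n|\le 1-\gamma^k|\rho_l|\le 1-\gamma^k|\rho_2|$ for every $l\ge 2$. All these values are therefore nonnegative, and the largest is $1-\mu\gamma^k$. This proves \eqref{eq:barW}. Membership in $[0,1)$ follows from $\gamma^k>0$, $\mu>0$, and $\mu\gamma^k\le|\rho_n|\gamma^k\le1$.

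The integer $K$ plays a role only when $\gamma^k$ is a decaying sequence. In that case $\gamma^k\le 1/|\rho_n|$ holds for all large $k$, and $K$ can be taken as the first index after which it does. The statement already restricts to such $k$, so no additional work is needed.

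The only step with real content is the strict inequality $\rho_2<0$, which is the kernel argument from connectivity. The point to check there is that every edge weight is strictly positive. This holds because each $w_{i\rightarrow j}\ge\delta>0$ by construction. Note also that $\rho_n\neq0$ whenever $n\ge2$, so the condition $\gamma^k\le 1/|\rho_n|$ is well defined.
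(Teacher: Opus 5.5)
Your proposal is correct and follows essentially the same route as the paper. Both arguments use connectivity to get a simple zero eigenvalue of $W$ with all other eigenvalues negative, diagonalize $\bar W^k$ in the eigenbasis of $W$ (eigenvalue $0$ on $\mathbf{1}_n$ and $1+\gamma^k\rho_i$ otherwise), and use $\gamma^k\le 1/|\rho_n|$ to make all eigenvalues nonnegative, so that $\|\bar W^k\|=1+\gamma^k\rho_2=1-\mu\gamma^k$. Your Laplacian quadratic-form argument for $\rho_2<0$, which uses $w_{i\to j}\ge\delta>0$, just fills in a step the paper asserts directly.
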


\begin{pf}
	Since $W$ is symmetric and satisfies $W \mathbf{1}_n = \mathbf{0}_n$, it has a zero eigenvalue $\rho_1 = 0$ with eigenvector $\mathbf{1}_n$. Given that $W$ corresponds to a connected graph, its zero eigenvalue is simple, and the remaining eigenvalues $\rho_2, \ldots, \rho_n$ are strictly negative. Therefore, $\mu = |\rho_2| = \min_{i \geq 2} |\rho_i| > 0$. The eigenvalues of the symmetric matrix $\bar{W}^k$ can be characterized as follows:
	
	(1) For eigenvector $\mathbf{1}_n$:
	\begin{equation*}
	\bar{W}^k \mathbf{1}_n = \big(I + \gamma^k W - \frac{1}{n} \mathbf{1}_n \mathbf{1}_n^T \big) \mathbf{1}_n = \mathbf{1}_n + \mathbf{0}_n - \mathbf{1}_n = 0 \cdot \mathbf{1}_n, 
	\end{equation*}
	so the corresponding eigenvalue is $0$.
	
	(2) For any eigenvector $\mathbf{u}_i$ of $W$ orthogonal to $\mathbf{1}_n$ (i.e., $W \mathbf{u}_i = \rho_i \mathbf{u}_i$ for $i \geq 2$):
	\begin{equation*}
	\bar{W}^k \mathbf{u}_i = \big(I + \gamma^k W - \frac{1}{n} \mathbf{1}_n \mathbf{1}_n^T \big) \mathbf{u}_i = (1 + \gamma^k \rho_i) \mathbf{u}_i, 
	\end{equation*}
	so the corresponding eigenvalue is $1 + \gamma^k \rho_i$ for $i=2,\ldots,n$.
	
	Since $\rho_i < 0$ for $i \geq 2$ and $\gamma^k > 0$, we have $1 + \gamma^k \rho_i < 1$. Choosing $K$ sufficiently large so that $\gamma^k \leq 1/|\rho_n|$ for all $k \geq K$ ensures $1 + \gamma^k \rho_i \geq 0$ for $i \geq 2$. Thus, the spectral norm of $\bar{W}^k$ for $k \geq K$ is given by the largest eigenvalue:
	\begin{equation*}
		\|\bar{W}^k\| = \max \big\{ 0, 1 + \gamma^k \rho_i \, | \, i \geq 2 \big\} = 1 + \gamma^k \rho_2, 
	\end{equation*}
 which concludes the proof. \hfill$\blacksquare$
\end{pf}

Next, we derive upper bounds for $\mathbb{E}[\|\bar{\mathbf{x}}^{k+1}-\mathbf{x}^*\|^2 \, | \, \mathcal{F}^k]$ and $\mathbb{E}[\|\mathbf{x}^{k+1}-\mathbf{1}_n \otimes \bar{\mathbf{x}}^{k+1}\|^2 \, | \, \mathcal{F}^k]$, where $\mathcal{F}^k$ denotes the $\sigma$-algebra generated by $\{\boldsymbol{\xi}^0, \mathbf{g}^0, \ldots, \boldsymbol{\xi}^{k-1}, \mathbf{g}^{k-1}\}$.

\begin{lemma}\label{lemma:first}
	Under Assumptions \ref{ass:Ff} and \ref{ass:step}, the following inequality holds for $k \geq 0$:
		\begin{align}
			 & \mathbb{E}[\|\bar{\mathbf{x}}^{k+1}-\mathbf{x}^*\|^2 \, | \, \mathcal{F}^k] \nonumber\\
			\leq & \Big(1+\frac{nL^2(\bar{\lambda}^k)^2}{\gamma^k}+\frac{\sqrt{d}(2 nL^2+1)}{n}\|\boldsymbol{\lambda}^k-\bar{\lambda}^k \mathbf{1}_n\|\nonumber\\
			 & +\frac{8dL^2\sum\nolimits_{i=1}^n \theta_i^k}{n}\Big)\|\bar{\mathbf{x}}^k-\mathbf{x}^*\|^2\nonumber\\ 
			 & +\Big(\frac{\gamma^k}{n}+\frac{8dL^2\sum\nolimits_{i=1}^n \theta_i^k}{n^2}\Big)\|\mathbf{x}^{k}-\mathbf{1}_n \otimes \bar{\mathbf{x}}^{k}\|^2\nonumber\\
			 & +\Big(\frac{4d\sum\nolimits_{i=1}^n \theta_i^k}{n^2}+\frac{2\sqrt{d}}{n}\|\boldsymbol{\lambda}^k-\bar{\lambda}^k\mathbf{1}_n\|\Big)\sum\nolimits_{i=1}^n\|\nabla f_i(\mathbf{x}^*)\|^2\nonumber\\
			 & +2(n-1)^2d\delta^2(\gamma^k)^2+\frac{2d\sum\nolimits_{i=1}^n\theta_i^k}{n}\sigma_g^2\nonumber\\
			 & -2\bar{\lambda}^k\big(F(\bar{\mathbf{x}}^k)-F(\mathbf{x}^*)\big), \label{part1}
		\end{align}
	where $\bar{\lambda}^k = \frac{1}{n}\sum\nolimits_{i=1}^n \lambda_i^k$, $\boldsymbol{\lambda}^k=[\lambda_1^k, \ldots, \lambda_n^k]^T$, $\theta_i^k = (\lambda_i^k)^2 + (\sigma_i^k)^2$, and $\mathbf{x}^*$ is an optimal solution to \eqref{Fd}. 
\end{lemma}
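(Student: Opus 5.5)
We average the compact recursion \eqref{bfx} over agents. Since $\mathbf{1}_n^T W = \mathbf{0}_n^T$, left-multiplying \eqref{bfx} by $\frac{1}{n}(\mathbf{1}_n^T\otimes I_d)$ gives
\begin{equation*}
\bar{\mathbf{x}}^{k+1}=\bar{\mathbf{x}}^k+\gamma^k\bar{\boldsymbol{\xi}}^k-\frac{1}{n}\sum\nolimits_{i=1}^n\Lambda_i^k\mathbf{g}_i^k ,
\end{equation*}
where $\bar{\boldsymbol{\xi}}^k=\frac1n\sum_i\boldsymbol{\xi}_i^k$. We then expand $\|\bar{\mathbf{x}}^{k+1}-\mathbf{x}^*\|^2$ and take $\mathbb{E}[\cdot\,|\,\mathcal{F}^k]$. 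The quantization noise $\boldsymbol{\xi}^k$, the stepsizes $\Lambda^k$ and the gradient noise are mutually independent and independent of $\mathcal{F}^k$, and $\mathbb{E}[\boldsymbol{\xi}_i^k]=\mathbf{0}_d$. Hence all cross terms involving $\bar{\boldsymbol{\xi}}^k$ vanish. What remains is a square term $(\gamma^k)^2\mathbb{E}\|\bar{\boldsymbol{\xi}}^k\|^2$, which we bound using the earlier estimate $\mathbb{E}\|\boldsymbol{\xi}_i^k\|^2\le\frac12(n-1)^2d\delta^2$ together with Jensen's inequality; this is where the constant $2(n-1)^2d\delta^2(\gamma^k)^2$ comes from, with slack from using $(a+b)^2\le 2a^2+2b^2$ once. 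We are left with the drift cross term and the second moment of the gradient step.

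\textbf{Cross term.} By Assumption \ref{ass:step} and unbiasedness of $\mathbf{g}_i^k$,
\begin{equation*}
-\tfrac2n\,\mathbb{E}\Big\langle \bar{\mathbf{x}}^k-\mathbf{x}^*,\ \sum\nolimits_i\Lambda_i^k\mathbf{g}_i^k\Big\rangle=-\tfrac2n\Big\langle \bar{\mathbf{x}}^k-\mathbf{x}^*,\ \sum\nolimits_i\lambda_i^k\nabla f_i(\mathbf{x}_i^k)\Big\rangle .
\end{equation*}
We split $\lambda_i^k=\bar\lambda^k+(\lambda_i^k-\bar\lambda^k)$.

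For the $\bar\lambda^k$ part, write $\nabla f_i(\mathbf{x}_i^k)=\nabla f_i(\bar{\mathbf{x}}^k)+[\nabla f_i(\mathbf{x}_i^k)-\nabla f_i(\bar{\mathbf{x}}^k)]$. Convexity of $F$ gives $-2\bar\lambda^k(F(\bar{\mathbf{x}}^k)-F(\mathbf{x}^*))$. The consensus-error remainder is handled by Young's inequality with weight $\gamma^k$:
\begin{equation*}
\frac{2\bar\lambda^k}{n}\,\|\bar{\mathbf{x}}^k-\mathbf{x}^*\|\,L\sum\nolimits_i\|\mathbf{x}_i^k-\bar{\mathbf{x}}^k\|\le \frac{nL^2(\bar\lambda^k)^2}{\gamma^k}\|\bar{\mathbf{x}}^k-\mathbf{x}^*\|^2+\frac{\gamma^k}{n}\|\mathbf{x}^k-\mathbf{1}_n\otimes\bar{\mathbf{x}}^k\|^2 ,
\end{equation*}
using Cauchy--Schwarz on the sum. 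This produces exactly the $\frac{nL^2(\bar\lambda^k)^2}{\gamma^k}$ and $\frac{\gamma^k}{n}$ coefficients in the statement.

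For the heterogeneity part $\sum_i(\lambda_i^k-\bar\lambda^k)\nabla f_i(\mathbf{x}_i^k)$, we apply Cauchy--Schwarz to get $\|\boldsymbol{\lambda}^k-\bar\lambda^k\mathbf{1}_n\|$ times $\big(\sum_i\|\nabla f_i(\mathbf{x}_i^k)\|^2\big)^{1/2}$. Then $2ab\le a^2+b^2$ and
\begin{equation*}
\|\nabla f_i(\mathbf{x}_i^k)\|^2\le 2L^2\|\mathbf{x}_i^k-\mathbf{x}^*\|^2+2\|\nabla f_i(\mathbf{x}^*)\|^2
\end{equation*}
yield the $\frac{\sqrt d(2nL^2+1)}{n}\|\boldsymbol{\lambda}^k-\bar\lambda^k\mathbf{1}_n\|$ coefficient and the $\frac{2\sqrt d}{n}\|\boldsymbol{\lambda}^k-\bar\lambda^k\mathbf 1_n\|\sum_i\|\nabla f_i(\mathbf{x}^*)\|^2$ term. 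The $\sqrt d$ factor arises from a crude bound on the vector stepsize mismatch.

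\textbf{Quadratic term.} For $\mathbb{E}\|\frac1n\sum_i\Lambda_i^k\mathbf{g}_i^k\|^2$ we use $\|\sum_i\cdot\|^2\le n\sum_i\|\cdot\|^2$ and the per-coordinate bound $\mathbb{E}[(\lambda_{il}^k)^2]=\theta_i^k$. The factor $d$ appears when passing from coordinatewise to full norms. We then split
\begin{equation*}
\mathbf{g}_i^k=(\mathbf{g}_i^k-\nabla f_i(\mathbf{x}_i^k))+(\nabla f_i(\mathbf{x}_i^k)-\nabla f_i(\mathbf{x}^*))+\nabla f_i(\mathbf{x}^*) ,
\end{equation*}
and further decompose $\mathbf{x}_i^k-\mathbf{x}^*=(\mathbf{x}_i^k-\bar{\mathbf{x}}^k)+(\bar{\mathbf{x}}^k-\mathbf{x}^*)$. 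This gives the $\theta_i^k$-weighted terms: $\sigma_g^2$, $8dL^2$ multiplying both error quantities, and $4d$ multiplying $\sum_i\|\nabla f_i(\mathbf{x}^*)\|^2$.

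\textbf{Main obstacle.} The difficulty is bookkeeping rather than ideas. The task is to choose the Young/Jensen splittings so that the constants match the statement, in particular the $1/n$ versus $1/n^2$ factors and the placement of $\sqrt d$ versus $d$, which come from the diagonal random stepsizes coupling coordinates of $\mathbf{g}_i^k$. We expect the independence structure between $\Lambda_i^k$ and $\mathbf{g}_i^k$ given $\mathcal{F}^k$ to need care. The constants in the statement appear loose, so any consistent chain of such inequalities should suffice.
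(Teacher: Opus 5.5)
Your heterogeneity step does not give the stated inequality, and your claim that any consistent chain of inequalities suffices is false. Your overall skeleton matches the paper's: averaged recursion, vanishing $\boldsymbol{\xi}^k$ cross terms, convexity on the $\bar\lambda^k$ part, and Young's inequality with weight $\gamma^k$.

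\textbf{Heterogeneity term.} You split $\lambda_i^k=\bar\lambda^k+(\lambda_i^k-\bar\lambda^k)$ on $\sum_i\lambda_i^k\nabla f_i(\mathbf{x}_i^k)$ and keep the mismatch part evaluated at $\mathbf{x}_i^k$. Since $\sum_i\|\mathbf{x}_i^k-\mathbf{x}^*\|^2=\|\mathbf{x}^k-\mathbf{1}_n\otimes\bar{\mathbf{x}}^k\|^2+n\|\bar{\mathbf{x}}^k-\mathbf{x}^*\|^2$, your chain actually produces an extra term $\frac{2L^2}{n}\|\boldsymbol{\lambda}^k-\bar\lambda^k\mathbf{1}_n\|\,\|\mathbf{x}^k-\mathbf{1}_n\otimes\bar{\mathbf{x}}^k\|^2$, which you dropped. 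The lemma has no room for it:
\begin{itemize}
\item its consensus coefficient is only $\frac{\gamma^k}{n}+\frac{8dL^2\sum_i\theta_i^k}{n^2}$;
\item the slack you noticed sits in the $\|\bar{\mathbf{x}}^k-\mathbf{x}^*\|^2$ coefficient and cannot be transferred to the consensus coefficient;
\item lowering your Young weight below $\gamma^k$ does not help in general; already for $d=1$ it fails once $\gamma^k\le 2L^2\|\boldsymbol{\lambda}^k-\bar\lambda^k\mathbf{1}_n\|$.
\end{itemize}
The paper reverses the order. It first writes $\nabla f_i(\mathbf{x}_i^k)=\nabla f_i(\bar{\mathbf{x}}^k)+[\nabla f_i(\mathbf{x}_i^k)-\nabla f_i(\bar{\mathbf{x}}^k)]$ under the full weight $\lambda_i^k$. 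It bounds that remainder by Young with weight $\gamma^k$ together with $\sum_i(\lambda_i^k)^2\le n^2(\bar\lambda^k)^2$, which needs $\lambda_i^k\ge 0$. This step, not slack, is the origin of the factor $n$ in $\frac{nL^2(\bar\lambda^k)^2}{\gamma^k}$. Only then does it split $\lambda_i^k$ on $\sum_i\lambda_i^k\nabla f_i(\bar{\mathbf{x}}^k)$. The mismatch term then involves only $\nabla f(\mathbf{1}_n\otimes\bar{\mathbf{x}}^k)$, bounded by $2nL^2\|\bar{\mathbf{x}}^k-\mathbf{x}^*\|^2+2\sum_i\|\nabla f_i(\mathbf{x}^*)\|^2$, with no consensus error.

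\textbf{Quadratic term.} This has the same structural problem. The route $\|\sum_i\cdot\|^2\le n\sum_i\|\cdot\|^2$ gives per-agent weights, $\frac{c}{n}\sum_i\theta_i^k\,\mathbb{E}[\|\mathbf{g}_i^k\|^2\,|\,\mathcal{F}^k]$ with $c\in\{1,d\}$. The consensus and $\nabla f_i(\mathbf{x}^*)$ contributions therefore take the form $\frac{1}{n}\sum_i\theta_i^k a_i$. This is not bounded by $\frac{d\sum_i\theta_i^k}{n^2}\sum_j a_j$ in general: take $d=1$, unequal $a_i$, and $\theta_i^k$ concentrated on the agent with the largest $a_i$.

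The paper instead uses $\|\hat\Lambda^k\mathbf{g}^k\|^2\le\|\hat\Lambda^k\|_F^2\|\mathbf{g}^k\|^2$ with $\hat\Lambda^k=[\Lambda_1^k,\ldots,\Lambda_n^k]$, plus independence. This gives $\frac{d}{n^2}\sum_i\theta_i^k\,\mathbb{E}[\|\mathbf{g}^k\|^2\,|\,\mathcal{F}^k]$, where the $d$ comes from the Frobenius norm. Expanding $\mathbb{E}[\|\mathbf{g}^k\|^2\,|\,\mathcal{F}^k]\le 2n\sigma_g^2+2\|\nabla f(\mathbf{x}^k)\|^2$ then yields exactly the four $\theta$-terms of the lemma.

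The order of the splittings decides which cross products appear. As written, your argument proves a variant of the lemma with different consensus and $\nabla f_i(\mathbf{x}^*)$ coefficients, not the stated inequality. That variant would still serve the theorem, since its extra coefficients are summable under the theorem's conditions.
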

\begin{pf}
	Given $\bar{\mathbf{x}}^k=\frac{1}{n} \sum\nolimits_{i=1}^n \mathbf{x}_i^k$ and combined with \eqref{bfx}, 
	we obtain
	\begin{equation}\label{barx}
		\begin{aligned}
			\bar{\mathbf{x}}^{k+1}= & \frac{1}{n}\mathbf{1}_n^T \otimes I_d \, \mathbf{x}^{k+1}\\ 
			= & \frac{1}{n}\mathbf{1}_n^T \otimes I_d (I_{nd}+\gamma^k W \otimes I_d) \mathbf{x}^k\\
			 & +\frac{1}{n}\mathbf{1}_n^T \otimes I_d\gamma^k\boldsymbol{\xi}^k-\frac{1}{n}\mathbf{1}_n^T \otimes I_d \Lambda^k \mathbf{g}^k \\
			= & \bar{\mathbf{x}}^k+\frac{\gamma^k}{n}\sum\nolimits_{i=1}^n\boldsymbol{\xi}_i^k-\frac{1}{n}\sum\nolimits_{i=1}^n \Lambda_i^k \mathbf{g}_i^k, 
		\end{aligned}
	\end{equation}
	where in the derivation we used $\mathbf{1}_n^T W=\mathbf{0}^T$. Thus, 
	\begin{equation}
		\begin{aligned}
			 & \|\bar{\mathbf{x}}^{k+1}-\mathbf{x}^*\|^2\\
			= & \big\|\bar{\mathbf{x}}^k-\mathbf{x}^*+\frac{\gamma^k}{n}\sum\nolimits_{i=1}^n\boldsymbol{\xi}_i^k-\frac{1}{n} \sum\nolimits_{i=1}^n \Lambda_i^k \mathbf{g}_i^k\big\|^2\\
			= & \|\bar{\mathbf{x}}^k-\mathbf{x}^*\|^2+\frac{(\gamma^k)^2}{n^2}\big\|\sum\nolimits_{i=1}^n \boldsymbol{\xi}_i^k \big\|^2+\frac{1}{n^2}\big\|\sum\nolimits_{i=1}^n \Lambda_i^k \mathbf{g}_i^k\big\|^2\\
			 & + \frac{2\gamma^k}{n}\big\langle \bar{\mathbf{x}}^k-\mathbf{x}^*, \sum\nolimits_{i=1}^n\boldsymbol{\xi}_i^k \big\rangle-\frac{2}{n} \big\langle \bar{\mathbf{x}}^k-\mathbf{x}^*, \sum\nolimits_{i=1}^n \Lambda_i^k \mathbf{g}_i^k \big\rangle\\
			 & -\frac{2\gamma^k}{n^2}\big\langle\sum\nolimits_{i=1}^n\boldsymbol{\xi}_i^k, \sum\nolimits_{i=1}^n \Lambda_i^k \mathbf{g}_i^k\big\rangle.
		\end{aligned}
	\end{equation}
	Given $\mathbb{E}[\boldsymbol{\xi}_i^k] = \mathbf{0}_d$ and $\mathbb{E}[\|\boldsymbol{\xi}_i^k\|^2] \leq \frac{1}{2}d(n-1)^2\delta^2$, it follows that
	\begin{equation}\label{Ebar}
		\begin{aligned}
			 & \mathbb{E}[\|\bar{\mathbf{x}}^{k+1}-\mathbf{x}^*\|^2 \, | \, \mathcal{F}^k]\\
			\leq & \|\bar{\mathbf{x}}^k-\mathbf{x}^*\|^2+\frac{1}{2}(n-1)^2d\delta^2(\gamma^k)^2\\
			 & +\frac{1}{n^2}\mathbb{E}\big[\|\sum\nolimits_{i=1}^n \Lambda_i^k \mathbf{g}_i^k\|^2 \, | \, \mathcal{F}^k\big]\\
			 & -\frac{2}{n}\mathbb{E}\big[\langle \bar{\mathbf{x}}^k-\mathbf{x}^*, \sum\nolimits_{i=1}^n \Lambda_i^k \mathbf{g}_i^k \rangle \, | \, \mathcal{F}^k\big].
		\end{aligned}
	\end{equation}
	
	Define $\hat{\Lambda }^k=[\Lambda_1^k, \ldots, \Lambda_n^k] \in \mathbb{R}^{d \times nd}$, then we can deduce 
	\begin{equation}\label{lambdag}
		\begin{aligned}
			 & \frac{1}{n^2}\mathbb{E}\big[\|\sum\nolimits_{i=1}^n \Lambda_i^k \mathbf{g}_i^k\|^2 \, | \, \mathcal{F}^k\big] \\
			= & \frac{1}{n^2}\mathbb{E}\big[\|\hat{\Lambda }^k \mathbf{g}^k\|^2 \, | \, \mathcal{F}^k\big]\\
			\leq & \frac{1}{n^2} \mathbb{E}\big[\|\hat{\Lambda }^k \|^2 \, | \, \mathcal{F}^k\big] \mathbb{E}\big[ \|\mathbf{g}^k\|^2 \, | \, \mathcal{F}^k\big].
		\end{aligned}
	\end{equation}
	Given $\|\hat{\Lambda }^k\|^2 \leq\|\hat{\Lambda }^k\|_F^2=\sum\nolimits_{i=1}^n \sum\nolimits_{l=1}^d(\lambda_{il}^k)^2$, $\mathbb{E}[\lambda_{il}^k]=\lambda_i^k$, and $\mathbb{D} [\lambda_{il}^k]=(\sigma_i^k)^2$, we can obtain
	\begin{equation}\label{hatlambda1}
		\mathbb{E}\big[\|\hat{\Lambda }^k \|^2 \, | \, \mathcal{F}^k\big]\leq\sum\nolimits_{i=1}^n d((\lambda_i^k)^2 + (\sigma_i^k)^2)=\sum\nolimits_{i=1}^n d\theta_i^k.
	\end{equation}
	
	Let $\nabla f(\mathbf{x}^k)=[(\nabla f_1(\mathbf{x}_1^k))^T, \ldots, (\nabla f_n(\mathbf{x}_n^k))^T]^T$, then
	\begin{equation}\label{g}
		\begin{aligned}
			\mathbb{E}\big[\|\mathbf{g}^k\|^2 \, | \, \mathcal{F}^k\big]= & \mathbb{E}\big[\|\mathbf{g}^k-\nabla f(\mathbf{x}^k)+\nabla f(\mathbf{x}^k)\|^2 \, | \, \mathcal{F}^k\big] \\
			\leq & \mathbb{E}\big[2\|\mathbf{g}^k-\nabla f(\mathbf{x}^k)\|^2 \, | \, \mathcal{F}^k\big]+2\|\nabla f(\mathbf{x}^k)\|^2\\
			\leq & 2n\sigma_g^2+2\|\nabla f(\mathbf{x}^k)\|^2, 
		\end{aligned}
	\end{equation}
	where we used $\mathbb{E}_{\boldsymbol{\eta}_i \sim \mathcal{O}_i}[\|\mathbf{g}_i^k-\nabla f_i(\mathbf{x}_i^k)\|^2] \leq \sigma_g^2$.
	
	For $\|\nabla f(\mathbf{x}^k)\|^2$ in \eqref{g}, we have
	\begin{equation}
		\begin{aligned}\label{f}
			 & \|\nabla f(\mathbf{x}^k)\|^2\\
			= & \|\nabla f(\mathbf{x}^k)-\nabla f(\mathbf{1}_n\otimes\mathbf{x}^*)+\nabla f(\mathbf{1}_n\otimes\mathbf{x}^*)\|^2 \\
			\leq & 2\|\nabla f(\mathbf{x}^k)-\nabla f(\mathbf{1}_n\otimes\mathbf{x}^*)\|^2+2\|\nabla f(\mathbf{1}_n\otimes\mathbf{x}^*)\|^2 \\
			\leq & 2 L^2\|\mathbf{x}^k-\mathbf{1}_n\otimes\mathbf{x}^*\|^2+2\sum\nolimits_{i=1}^n\|\nabla f_i(\mathbf{x}^*)\|^2 \\
			\leq & 2L^2\|\mathbf{x}^k-\mathbf{1}_n\otimes\bar{\mathbf{x}}^k+\mathbf{1}_n\otimes\bar{\mathbf{x}}^k-\mathbf{1}_n\otimes\mathbf{x}^*\|^2\\
			 & +2\sum\nolimits_{i=1}^n\|\nabla f_i(\mathbf{x}^*)\|^2\\
			\leq & 4 L^2 \sum\nolimits_{i=1}^n\|\mathbf{x}_i^k-\mathbf{1}_n\otimes\bar{\mathbf{x}}^k\|^2+4 n L^2\|\bar{\mathbf{x}}^k-\mathbf{x}^*\|^2\\
			 & +2\sum\nolimits_{i=1}^n\|\nabla f_i(\mathbf{x}^*)\|^2, 
		\end{aligned}
	\end{equation}
	where in the derivation we use the Lipschitz continuity of $\nabla f_i$ and $(a+b)^2 \leq 2a^2+2b^2$.
	
	Substituting \eqref{hatlambda1}, \eqref{g}, and \eqref{f} into \eqref{lambdag} yields
	\begin{equation}\label{Exbar}
		\begin{aligned}
			 & \frac{1}{n^2}\mathbb{E}\big[\|\sum\nolimits_{i=1}^n \Lambda_i^k \mathbf{g}_i^k\|^2 \, | \, \mathcal{F}^k\big]\\
			\leq & \frac{d}{n^2}\sum\nolimits_{i=1}^n \theta_i^k\big[8 n L^2\|\bar{\mathbf{x}}^k-\mathbf{x}^*\|^2+8 L^2 \|\mathbf{x}^k-\mathbf{1}_n\otimes\bar{\mathbf{x}}^k\|^2\\
			 & +4\sum\nolimits_{i=1}^n\|\nabla f_i(\mathbf{x}^*)\|^2+2 n \sigma_g^2\big].
		\end{aligned}
	\end{equation}
	
	Next, we analyze the last term of \eqref{Ebar}:
	\begin{equation}\label{lamdaf}
		\begin{aligned}
			 & -\frac{2}{n} \mathbb{E}\big[\langle \bar{\mathbf{x}}^k-\mathbf{x}^*, \sum\nolimits_{i=1}^n \Lambda_i^k \mathbf{g}_i^k\rangle \, | \, \mathcal{F}^k\big] \\
			= & -\frac{2}{n}\sum\nolimits_{i=1}^{n} \langle\lambda_i^k \nabla f_i(\mathbf{x}_i^k), \bar{\mathbf{x}}^k-\mathbf{x}^*\rangle\\
			= & \frac{2}{n}\sum\nolimits_{i=1}^{n}\langle\lambda_i^k(\nabla f_i(\bar{\mathbf{x}}^k)-\nabla f_i(\mathbf{x}_i^k)), \bar{\mathbf{x}}^k-\mathbf{x}^*\rangle \\
			 & -\frac{2}{n}\sum\nolimits_{i=1}^{n}\langle\lambda_i^k \nabla f_i(\bar{\mathbf{x}}^k), \bar{\mathbf{x}}^k-\mathbf{x}^*\rangle.
		\end{aligned}
	\end{equation}
	
	For the first term in the right hand side of \eqref{lamdaf}, we have
	\begin{equation}\label{lambdaf-f}
		\begin{aligned}
			 & \frac{2}{n}\sum\nolimits_{i=1}^{n}\langle\lambda_i^k(\nabla f_i(\bar{\mathbf{x}}^k)-\nabla f_i(\mathbf{x}_i^k)), \bar{\mathbf{x}}^k-\mathbf{x}^*\rangle \\
			\leq & \frac{2}{n}\sum\nolimits_{i=1}^{n}\lambda_i^kL\|\mathbf{x}_i^k-\bar{\mathbf{x}}^k\|\cdot\|\bar{\mathbf{x}}^k-\mathbf{x}^*\| \\
			\leq & \frac{\gamma^k}{n}\sum\nolimits_{i=1}^{n}\|\mathbf{x}_i^k-\bar{\mathbf{x}}^k\|^2+\frac{L^2}{n\gamma^k}\sum\nolimits_{i=1}^{n}(\lambda_i^k)^2\|\bar{\mathbf{x}}^k-\mathbf{x}^*\|^2\\
			\leq & \frac{\gamma^k}{n}\|\mathbf{x}^k-\mathbf{1}_n\otimes\bar{\mathbf{x}}^k\|^2+\frac{nL^2(\bar{\lambda}^k)^2}{\gamma^k}\sum\nolimits_{i=1}^{n}\|\bar{\mathbf{x}}^k-\mathbf{x}^*\|^2, 
		\end{aligned}
	\end{equation}
	where in the derivation we used $\sum\nolimits_{i=1}^n(\lambda_i^k)^2\leq n^2(\bar{\lambda}^k)^2$ with $\bar{\lambda}^k=\frac{1}{n}\sum\nolimits_{i=1}^n\lambda_i^k$.
	
	For the second term in the right hand side of \eqref{lamdaf}, we have
	\begin{equation}\label{lambdaxbar1}
		\begin{aligned}
			 & -\frac{2}{n}\sum\nolimits_{i=1}^{n}\langle\lambda_i^k \nabla f_i(\bar{\mathbf{x}}^k), \bar{\mathbf{x}}^k-\mathbf{x}^*\rangle\\
			= & \frac{2}{n}\sum\nolimits_{i=1}^n\langle(\lambda_i^k-\bar{\lambda}^k) \nabla f_i(\bar{\mathbf{x}}^k), \mathbf{x}^*-\bar{\mathbf{x}}^k\rangle\\
			 & +\frac{2}{n}\sum\nolimits_{i=1}^n\langle\bar{\lambda}^k \nabla f_i(\bar{\mathbf{x}}^k), \mathbf{x}^*-\bar{\mathbf{x}}^k\rangle.
		\end{aligned}
	\end{equation}
	Regarding the last term in \eqref{lambdaxbar1}, we obtain
	\begin{equation}\label{lambdaxbar}
		\begin{aligned}
			 & \frac{2}{n}\sum\nolimits_{i=1}^n\langle\bar{\lambda}^k \nabla f_i(\bar{\mathbf{x}}^k), \mathbf{x}^*-\bar{\mathbf{x}}^k\rangle\\
			= & 2\bar{\lambda}^k\langle \nabla F(\bar{\mathbf{x}}^k), \mathbf{x}^*-\bar{\mathbf{x}}^k\rangle
			\leq-2\bar{\lambda}^k(F(\bar{\mathbf{x}}^k)-F(\mathbf{x}^*)), 
		\end{aligned}
	\end{equation}
	where we used the convexity of $F(\cdot)$. Then we analyze the first term in the right hand side of \eqref{lambdaxbar1}. Given $\boldsymbol{\lambda}^k=[\lambda_1^k, \ldots, \lambda_n^k]^T$, we have
	\begin{equation}\label{innerlambda}
		\begin{aligned}
			 & \frac{2}{n}\sum\nolimits_{i=1}^n\langle(\lambda_i^k-\bar{\lambda}^k) \nabla f_i(\bar{\mathbf{x}}^k), \mathbf{x}^*-\bar{\mathbf{x}}^k\rangle\\ 
			= & \frac{2}{n}\langle\sum\nolimits_{i=1}^n(\lambda_i^k-\bar{\lambda}^k) \nabla f_i(\bar{\mathbf{x}}^k), \mathbf{x}^*-\bar{\mathbf{x}}^k\rangle \\
			\leq & \frac{2}{n}\|\sum\nolimits_{i=1}^n(\lambda_i^k-\bar{\lambda}^k) \nabla f_i(\bar{\mathbf{x}}^k)\|\cdot\|\bar{\mathbf{x}}^k-\mathbf{x}^*\| \\
			= & \frac{2}{n}\|((\boldsymbol{\lambda}^k-\bar{\lambda}^k \mathbf{1}_n)^T \otimes I_d) \nabla f(\mathbf{1}_n \otimes \bar{\mathbf{x}}^k)\|\cdot\|\bar{\mathbf{x}}^k-\mathbf{x}^*\| \\
			\leq & \frac{\sqrt{d}}{n}\|\boldsymbol{\lambda}^k-\bar{\lambda}^k \mathbf{1}_n\| \big(\|\nabla f(\mathbf{1}_n \otimes \bar{\mathbf{x}}^k)\|^2+\|\bar{\mathbf{x}}^k-\mathbf{x}^*\|^2 \big).
		\end{aligned}
	\end{equation}
	Moreover, $\|\nabla f(\mathbf{1}_n \otimes \bar{\mathbf{x}}^k)\|^2$ in \eqref{innerlambda} satisfies
	\begin{equation}\label{fn}
		\begin{aligned}
			 & \|\nabla f(\mathbf{1}_n \otimes \bar{\mathbf{x}}^k)\|^2 \\
			= & \|\nabla f(\mathbf{1}_n \otimes \bar{\mathbf{x}}^k)-\nabla f(\mathbf{1}_n\otimes\mathbf{x}^*)+\nabla f(\mathbf{1}_n\otimes\mathbf{x}^*)\|^2 \\
			\leq & 2\|\nabla f(\mathbf{1}_n \otimes \bar{\mathbf{x}}^k)-\nabla f(\mathbf{1}_n\otimes\mathbf{x}^*)\|^2 + 2\|\nabla f(\mathbf{1}_n\otimes\mathbf{x}^*)\|^2 \\
			= & 2nL^2\|\bar{\mathbf{x}}^k-\mathbf{x}^*\|^2+2\sum\nolimits_{i=1}^n\|\nabla f_i(\mathbf{x}^*)\|^2.
		\end{aligned}
	\end{equation}
	Therefore, combining \eqref{lambdaxbar1}, \eqref{lambdaxbar}, \eqref{innerlambda}, and \eqref{fn}, we obtain
	\begin{equation}\label{lambdaf2}
		\begin{aligned}
			 & -\frac{2}{n}\sum\nolimits_{i=1}^{n}\langle\lambda_i^k \nabla f_i(\bar{\mathbf{x}}^k), \bar{\mathbf{x}}^k-\mathbf{x}^*\rangle\\
			\leq & \frac{\sqrt{d}(2 nL^2+1)}{n}\|\boldsymbol{\lambda}^k-\bar{\lambda}^k \mathbf{1}_n\|\cdot\|\bar{\mathbf{x}}^k-\mathbf{x}^*\|^2\\
			 & +\frac{2\sqrt{d}}{n}\|\boldsymbol{\lambda}^k-\bar{\lambda}^k\mathbf{1}_n\| \sum\nolimits_{i=1}^n\|\nabla f_i(\mathbf{x}^*)\|^2\\
			 & -2\bar{\lambda}^k\big(F(\bar{\mathbf{x}}^k)-F(\mathbf{x}^*)\big).
		\end{aligned}
	\end{equation}
	Combining \eqref{lambdaf-f} and \eqref{lambdaf2} yields
	\begin{equation}\label{nablaf}
		\begin{aligned}
			 & -\frac{2}{n}\mathbb{E}\big[\langle \bar{\mathbf{x}}^k-\mathbf{x}^*, \sum\nolimits_{i=1}^n \Lambda_i^k g_i^k \rangle \, | \, \mathcal{F}^k\big] \\
			\leq & \Big(\frac{nL^2(\bar{\lambda}^k)^2}{\gamma^k}+\frac{\sqrt{d}(2 nL^2+1)}{n}\|\boldsymbol{\lambda}^k-\bar{\lambda}^k \mathbf{1}_n\|\Big)\|\bar{\mathbf{x}}^k-\mathbf{x}^*\|^2\\
			 & +\frac{\gamma^k}{n}\|\mathbf{x}^k-\mathbf{1}_n\otimes\bar{\mathbf{x}}^k\|^2-2\bar{\lambda}^k\big(F(\bar{\mathbf{x}}^k)-F(\mathbf{x}^*)\big)\\
			 & +\frac{2\sqrt{d}}{n}\|\boldsymbol{\lambda}^k-\bar{\lambda}^k\mathbf{1}_n\| \sum\nolimits_{i=1}^n\|\nabla f_i(\mathbf{x}^*)\|^2.
		\end{aligned}
	\end{equation}
	
	Finally, plugging \eqref{Exbar} and \eqref{nablaf} into \eqref{Ebar}, we have \eqref{part1}. \hfill$\blacksquare$
\end{pf}

\begin{lemma}\label{lemma2}
	Under Assumptions \ref{ass:graph}, \ref{ass:Ff}, and \ref{ass:step}, there exists $K\in \mathbb{Z}^+$ such that for all $k \geq K$, the following inequality holds:
	\begin{equation}\label{part2}
		\begin{aligned}
			 & \mathbb{E}[\|\mathbf{x}^{k+1}-\mathbf{1}_n \otimes \bar{\mathbf{x}}^{k+1}\|^2 \, | \, \mathcal{F}^k]\\ 
			\leq & \Big(1 - \mu\gamma^k + \frac{8L^2\theta_{\max}^k}{\mu\gamma^k} \Big) \|\mathbf{x}^{k}-\mathbf{1}_n \otimes \bar{\mathbf{x}}^{k}\|^2\\
			 & +\frac{8 n L^2\theta_{\max}^k}{\mu\gamma^k }\|\bar{\mathbf{x}}^k-\mathbf{x}^*\|^2+\frac{2n\theta_{\max}^k}{\mu\gamma^k }\sigma_g^2\\
			 & +\frac{4\theta_{\max}^k}{\mu\gamma^k }\sum\nolimits_{i=1}^n\|\nabla f_i(\mathbf{x}^*)\|^2+\frac{1}{2}(n-1)^2d\delta^2(\gamma^k)^2, 
		\end{aligned}
	\end{equation}
	where $\mu = |\rho_2|$ is the positive constant given by the minimum absolute nonzero eigenvalue of $W$ and $\theta_{\max}^k = \max\{ \theta_i^k \, | \, 1 \leq i \leq n \} = \max\{ (\lambda_i^k)^2+(\sigma_i^k)^2 \, | \, 1 \leq i \leq n \}$.
\end{lemma}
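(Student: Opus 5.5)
We subtract the averaged dynamics \eqref{barx} from the compact recursion \eqref{bfx}. This gives an exact recursion for the consensus error $\mathbf{e}^k \triangleq \mathbf{x}^k-\mathbf{1}_n\otimes\bar{\mathbf{x}}^k$. Write $J=\frac{1}{n}\mathbf{1}_n\mathbf{1}_n^T$. Then $\mathbf{1}_n\otimes\bar{\mathbf{x}}^{k+1}=(J\otimes I_d)\mathbf{x}^{k+1}$, and $(J\otimes I_d)(I+\gamma^kW\otimes I_d)=J\otimes I_d$ because $\mathbf{1}_n^TW=\mathbf{0}^T$. Also $(I+\gamma^kW-J)J=0$ and $(I-J)\mathbf{e}^k=\mathbf{e}^k$. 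Together these yield
\begin{equation*}
\mathbf{e}^{k+1}=(\bar{W}^k\otimes I_d)\mathbf{e}^k+\gamma^k((I_n-J)\otimes I_d)\boldsymbol{\xi}^k-((I_n-J)\otimes I_d)\Lambda^k\mathbf{g}^k ,
\end{equation*}
with $\bar{W}^k$ as in Lemma~\ref{lemma:W}. Take $K$ from Lemma~\ref{lemma:W}, so that $\|\bar W^k\otimes I_d\|=1-\mu\gamma^k$ for $k\ge K$. Also note $\|I_n-J\|\le 1$.

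Next we expand the squared norm and take the conditional expectation given $\mathcal{F}^k$. The quantization error $\boldsymbol{\xi}^k$ has zero mean and is independent of the stepsize and gradient noise at step $k$, so every cross term involving $\boldsymbol{\xi}^k$ vanishes. Its square contributes at most $(\gamma^k)^2\,\mathbb{E}\|\boldsymbol{\xi}^k\|^2$. The remaining terms are $\mathbb{E}\|(\bar W^k\otimes I_d)\mathbf{e}^k-((I-J)\otimes I_d)\Lambda^k\mathbf{g}^k\|^2$. We bound these with Young's inequality $\|a+b\|^2\le(1+\epsilon)\|a\|^2+(1+1/\epsilon)\|b\|^2$, taking $\epsilon=\mu\gamma^k/(1-\mu\gamma^k)$. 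Then $(1+\epsilon)(1-\mu\gamma^k)^2=1-\mu\gamma^k$ and $1+1/\epsilon=1/(\mu\gamma^k)$. This gives
\begin{equation*}
\mathbb{E}\|\mathbf{e}^{k+1}\|^2\le(1-\mu\gamma^k)\|\mathbf{e}^k\|^2+\frac{1}{\mu\gamma^k}\mathbb{E}\|\Lambda^k\mathbf{g}^k\|^2+(\gamma^k)^2\mathbb{E}\|\boldsymbol{\xi}^k\|^2 .
\end{equation*}

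For the gradient term, $\Lambda^k$ is block diagonal. Since the stepsizes are generated independently of the gradient sampling, $\mathbb{E}\|\Lambda_i^k\mathbf{g}_i^k\|^2=\sum_l\mathbb{E}[(\lambda_{il}^k)^2]\,\mathbb{E}[(g_{il}^k)^2]\le\theta_i^k\,\mathbb{E}\|\mathbf{g}_i^k\|^2$. Hence $\mathbb{E}\|\Lambda^k\mathbf{g}^k\|^2\le\theta^k_{\max}\,\mathbb{E}\|\mathbf{g}^k\|^2$. This is slightly sharper than the Frobenius bound used in Lemma~\ref{lemma:first} and avoids the factor $d$, consistent with the statement. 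We then reuse \eqref{g} and \eqref{f} from the proof of Lemma~\ref{lemma:first}: $\mathbb{E}\|\mathbf{g}^k\|^2\le 2n\sigma_g^2+8L^2\|\mathbf{e}^k\|^2+8nL^2\|\bar{\mathbf{x}}^k-\mathbf{x}^*\|^2+4\sum_i\|\nabla f_i(\mathbf{x}^*)\|^2$. Multiplying by $\theta_{\max}^k/(\mu\gamma^k)$ produces exactly the four $\theta^k_{\max}$ terms in \eqref{part2}.

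The main obstacle is the quantization term. A crude bound $\mathbb{E}\|\boldsymbol{\xi}^k\|^2\le\frac12 n(n-1)^2d\delta^2$ carries an extra factor $n$ compared with the stated constant $\frac12(n-1)^2d\delta^2$. To recover the stated constant, I would first try to exploit the projection $(I-J)$ together with the per-agent bound derived above. If that does not work, I would accept a constant differing by a factor depending only on $n$; this does not affect the subsequent convergence argument, since only summability of $(\gamma^k)^2$ matters.
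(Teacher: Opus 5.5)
Your argument follows the paper's proof step for step. It uses the same recursion $\mathbf{e}^{k+1}=(\bar W^k\otimes I_d)\mathbf{e}^k+\gamma^k R\boldsymbol{\xi}^k-R\Lambda^k\mathbf{g}^k$ with $R=(I_n-J)\otimes I_d$, the same Young parameter $\mu\gamma^k/(1-\mu\gamma^k)$, and the same reuse of \eqref{g} and \eqref{f}. You depart from it only on the stepsize term, and there your version is the sound one. The paper factors out $\mathbb{E}[\|\Lambda^k\|^2\,|\,\mathcal{F}^k]$ and claims it is at most $\theta^k_{\max}$. But $\|\Lambda^k\|^2=\max_{i,l}(\lambda^k_{il})^2$, and the expectation of a maximum is at least, not at most, the maximum of the expectations. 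So that step is unjustified, and generally false, once the stepsizes are random. Your coordinatewise identity $\mathbb{E}\|\Lambda_i^k\mathbf{g}_i^k\|^2=\sum_l\theta_i^k\,\mathbb{E}[(g^k_{il})^2]$ gives the same four $\theta^k_{\max}$ terms legitimately.

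Your concern about the quantization constant is justified, and the paper does not resolve it. After dropping $R$ via $\|R\|=1$, it simply inserts the per-agent bound $\mathbb{E}\|\boldsymbol{\xi}_i^k\|^2\le\frac12(n-1)^2d\delta^2$ where a bound on the stacked $\mathbb{E}\|\boldsymbol{\xi}^k\|^2$ is needed. Whether keeping $R$ rescues the stated constant depends on how quantization randomness is shared across messages, which the paper never specifies.

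\emph{Independent quantization per message.} Suppose $\boldsymbol{\xi}_i^k=\sum_{j\in\mathcal{N}_i}\boldsymbol{\zeta}_{ij}$, with all $\boldsymbol{\zeta}_{ij}$ independent, zero-mean, and $\mathbb{E}\|\boldsymbol{\zeta}_{ij}\|^2\le\frac12 d\delta^2$. Then $\mathbb{E}\|R\boldsymbol{\xi}^k\|^2=\mathbb{E}\|\boldsymbol{\xi}^k\|^2-\frac1n\mathbb{E}\|\sum_i\boldsymbol{\xi}_i^k\|^2=(1-\frac1n)\sum_i\sum_{j\in\mathcal{N}_i}\mathbb{E}\|\boldsymbol{\zeta}_{ij}\|^2\le\frac12(n-1)^2d\delta^2$. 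This is exactly the stated constant, so your projection idea works here.

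\emph{One quantized copy per agent.} Suppose instead a single $\tilde{\mathbf{x}}_i^k$ per agent enters all messages, as the notation suggests. Then $\boldsymbol{\xi}^k=(W\otimes I_d)(\tilde{\mathbf{x}}^k-\mathbf{x}^k)$ has zero block sum, so $R\boldsymbol{\xi}^k=\boldsymbol{\xi}^k$ and the projection gains nothing. The stated bound can then fail. Take $n=2$, $d=1$, $1/\delta\in\mathbb{Z}^+$, $w_{12}=1$, $f_i(x)=\frac{L}{2}(x-\frac{\delta}{2})^2$, $\sigma_g=0$, and $x_1^k=x_2^k=\frac{\delta}{2}$. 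Then $\mathbf{e}^{k+1}=\gamma^k\boldsymbol{\xi}^k$ with $\xi_1^k=-\xi_2^k=\tilde x_2^k-\tilde x_1^k$, so $\mathbb{E}\|\mathbf{e}^{k+1}\|^2=(\gamma^k)^2\delta^2$. The right-hand side of \eqref{part2} is only $\frac12(\gamma^k)^2\delta^2$.

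So your fallback constant $\frac12 n(n-1)^2d\delta^2$ is what holds without extra independence assumptions. As you note, it changes nothing in the convergence theorem, which only needs $\sum_k(\gamma^k)^2<\infty$.
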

\begin{pf}
	Given $\bar{\mathbf{x}}^{k}=\frac{1}{n}(\mathbf{1}_n^T\otimes I_d)\mathbf{x}^k$, from \eqref{bfx} and \eqref{barx} we have
	\begin{equation}\label{mathbfx1}
		\begin{aligned}
			& \mathbf{x}^{k+1}-\mathbf{1}_n \otimes \bar{\mathbf{x}}^{k+1} \\
			= & (I_{nd}-\frac{1}{n}\mathbf{1}_n\mathbf{1}_n^T\otimes I_d)\mathbf{x}^{k+1}\\
			= & (I_{nd} + \gamma^kW \otimes I_d-\frac{1}{n}\mathbf{1}_n\mathbf{1}_n^T\otimes I_d)\mathbf{x}^{k}\\
			& + (I_{nd}-\frac{1}{n}\mathbf{1}_n\mathbf{1}_n^T\otimes I_d) (\gamma^k\boldsymbol{\xi}^k - \Lambda^k\mathbf{g}^k) \\
			= & (\bar{W}^k\otimes I_d) \mathbf{x}^k+\gamma^k R\boldsymbol{\xi}^k-R\Lambda^k \mathbf{g}^k, 
		\end{aligned}	 	
	\end{equation}
	where $\bar{W}^k = I_n + \gamma^kW- \frac{1}{n}\mathbf{1}_n\mathbf{1}_n^T$ and $R = (I_n - \frac{1}{n} \mathbf{1}_n \mathbf{1}_n^T ) \otimes I_d$. Since $\bar{W} \mathbf{1}_n = \mathbf{0}_n$, it follows that
	\begin{equation}
		\begin{aligned}
			 & (\bar{W}^k\otimes I_d)(\mathbf{1}_n \otimes \bar{\mathbf{x}}^k)=\big(\bar{W}^k\times\mathbf{1}_n\big) \otimes(I_d \times \bar{\mathbf{x}}^k)=0, 
		\end{aligned}
	\end{equation}
	where we used the property $(A \otimes B)(C \otimes D)=(A C) \otimes(B D)$ in the derivation. Therefore, \eqref{mathbfx1} can be rewritten as
	\begin{equation}\label{mathbfx2}
		\begin{aligned}
			& \mathbf{x}^{k+1}-\mathbf{1}_n \otimes \bar{\mathbf{x}}^{k+1} \\ 
			= & (\bar{W}^k\otimes I_d)(\mathbf{x}^k-\mathbf{1}_n \otimes \bar{\mathbf{x}}^{k}) +\gamma^k R\boldsymbol{\xi}^k-R\Lambda^k \mathbf{g}^k.
		\end{aligned}
	\end{equation}
	Taking the squared norm on both sides of \eqref{mathbfx2} yields
	\begin{equation}\label{normx}
		\begin{aligned}
			 & \|\mathbf{x}^{k+1}-\mathbf{1}_n \otimes \bar{\mathbf{x}}^{k+1}\|^2\\
			= & \|(\bar{W}^k\otimes I_d)(\mathbf{x}^k-\mathbf{1}_n \otimes \bar{\mathbf{x}}^{k})-R\Lambda^k \mathbf{g}^k\|^2 +\|\gamma^k R\boldsymbol{\xi}^k\|^2\\
			 & +2\langle (\bar{W}^k\otimes I_d)(\mathbf{x}^k-\mathbf{1}_n \otimes \bar{\mathbf{x}}^{k})-R\Lambda^k \mathbf{g}^k, \gamma^k R\boldsymbol{\xi}^k\rangle\\
			\leq & \|(\bar{W}^k\otimes I_d)(\mathbf{x}^k-\mathbf{1}_n \otimes \bar{\mathbf{x}}^{k})-R\Lambda^k \mathbf{g}^k\|^2+(\gamma^k)^2\| \boldsymbol{\xi}^k\|^2\\
			 & +2\langle(\bar{W}^k\otimes I_d)(\mathbf{x}^k-\mathbf{1}_n \otimes \bar{\mathbf{x}}^{k})-R\Lambda^k \mathbf{g}^k, \gamma^k R\boldsymbol{\xi}^k\rangle,
		\end{aligned}
	\end{equation}
	where we used $\|R\|=1$ since the eigenvalues of $R$ are either $1$ or $0$.
	
	Given the $\sigma$-algebra $\mathcal{F}^k$, $\boldsymbol{\xi}^k$ is independent of $\mathbf{x}^k$ and $\Lambda^k\mathbf{g}^k$. Taking the conditional expectation and using $\mathbb{E}[\boldsymbol{\xi}_i^k]= \mathbf{0}_d$ and $\mathbb{E}[\|\boldsymbol{\xi}_i^k\|^2] \leq \frac{1}{2}(n-1)^2d\delta^2$, we obtain
	\begin{equation}\label{Emathbfx}
		\begin{aligned}
			 & \mathbb{E}\left[\|\mathbf{x}^{k+1}-\mathbf{1}_n \otimes \bar{\mathbf{x}}^{k+1}\|^2 \, | \, \mathcal{F}^k\right]\\ 
			\leq & \mathbb{E}\big[\|(\bar{W}^k\otimes I_d)(\mathbf{x}^k-\mathbf{1}_n \otimes \bar{\mathbf{x}}^{k})-R\Lambda^k\mathbf{g}^k\|^2 \, | \, \mathcal{F}^k\big]\\
			 & + \frac{1}{2}(n-1)^2d\delta^2(\gamma^k)^2\\
			\leq & \mathbb{E}\big[ \big(\|\bar{W}^k\| \cdot \|\mathbf{x}^k-\mathbf{1}_n \otimes \bar{\mathbf{x}}^{k}\| + \|\Lambda^k\| \cdot \| \mathbf{g}^k\|\big)^2 \, | \, \mathcal{F}^k\big]\\
			 & + \frac{1}{2}(n-1)^2d\delta^2(\gamma^k)^2, 
		\end{aligned}
	\end{equation}
	where in the derivation we used $\|\bar{W}^k\otimes I_d\|=\|\bar{W}^k\|$ and $\|R\|=1$. 
	
	From Lemma \ref{lemma:W}, there exists $K\in \mathbb{Z}^+$ such that for all $k \geq K$, we have $\|\bar{W}^k\| = 1 - \mu\gamma^k\in(0, 1)$. Applying the inequality $(a+b)^2 \leq (1+\alpha)a^2 + (1+\alpha^{-1})b^2$ for any $a, b \in \mathbb{R}$ and $\alpha > 0$, and setting $\alpha=\frac{\mu\gamma^k}{1-\mu\gamma^k}$, for $k \geq K$ we have
	\begin{equation}\label{Emathbfx1}
		\begin{aligned}
			 & \mathbb{E}\big[\big(\|\bar{W}^k\|\cdot\|\mathbf{x}^k-\mathbf{1}_n \otimes \bar{\mathbf{x}}^{k}\| + \|\Lambda^k\|\cdot\| \mathbf{g}^k\|\big)^2 \, | \, \mathcal{F}^k\big]\\ 
			\leq & (1 - \mu\gamma^k)\|\mathbf{x}^k-\mathbf{1}_n \otimes \bar{\mathbf{x}}^{k}\|^2\\
			 & + \frac{1}{\mu\gamma^k}\mathbb{E}\big[\|\Lambda^k\|^2 \, | \, \mathcal{F}^k\big]\mathbb{E}\big[\|\mathbf{g}^k\|^2 \, | \, \mathcal{F}^k\big].
		\end{aligned}
	\end{equation}
	Since $\Lambda^k$ is a diagonal matrix, $\|\Lambda^k\|^2 = \max\{(\lambda_{il}^k)^2 \, | \, 1\leq i\leq n, 1\leq l\leq d\}$. Given $\mathbb{E}[\lambda_{il}^k]=\lambda_i^k$ and $\mathbb{D} [\lambda_{il}^k]=(\sigma_i^k)^2$, we have
	\begin{equation}\label{Emathbfx2}
		\begin{aligned}
			 & \mathbb{E}\big[\|\Lambda^k\|^2 \, | \, \mathcal{F}^k\big]
			\leq\max_{1\leq i\leq n}(\lambda_i^k)^2+(\sigma_i^k)^2=\max_{1\leq i\leq n}\theta_i^k=\theta_{\max}^k.
		\end{aligned}
	\end{equation}
	Combining \eqref{g}, \eqref{f}, \eqref{Emathbfx1}, and \eqref{Emathbfx2}, we obtain
	\begin{equation}\label{Emathbfx3}
		\begin{aligned}
			 & \mathbb{E}\big[\big(\|\bar{W}^k\|\cdot\|\mathbf{x}^k-\mathbf{1}_n \otimes \bar{\mathbf{x}}^{k}\| + \|\Lambda^k\|\cdot\| \mathbf{g}^k\|\big)^2 \, | \, \mathcal{F}^k\big]\\ 
			\leq & \Big(1 - \mu\gamma^k + \frac{8L^2\theta_{\max}^k}{\mu\gamma^k} \Big) \|\mathbf{x}^{k}-\mathbf{1}_n \otimes \bar{\mathbf{x}}^{k}\|^2\\
			 & +\frac{8 n L^2\theta_{\max}^k}{\mu\gamma^k }\|\bar{\mathbf{x}}^k-\mathbf{x}^*\|^2+\frac{2n\theta_{\max}^k}{\mu\gamma^k }\sigma_g^2\\
			 & +\frac{4\theta_{\max}^k}{\mu\gamma^k }\sum\nolimits_{i=1}^n\|\nabla f_i(\mathbf{x}^*)\|^2.
		\end{aligned}
	\end{equation}
	Substituting \eqref{Emathbfx3} into \eqref{Emathbfx} yields \eqref{part2} for $k \geq K$. \hfill$\blacksquare$
\end{pf}

\begin{theorem}
	Under Assumptions \ref{ass:graph}, \ref{ass:Ff}, and \ref{ass:step}, if the optimization problem \eqref{Fd} admits at least one optimal solution $\mathbf{x}^*$, then our algorithm achieves almost sure convergence. Specifically, each agent's state $\mathbf{x}_i^k$ converges to an optimal solution $\mathbf{x}^*$ of \eqref{Fd}, provided that the sequences $\{\lambda_i^k\}$, $\{\sigma_i^k\}$, and $\{\gamma^k\}$ satisfy the following conditions for all $i \in \mathcal{V}$:
	\begin{equation}\label{lambdaxi}
		\begin{aligned}
			 & \sum\nolimits_{k=0}^{\infty} \lambda_i^k=\infty, \quad \sum\nolimits_{k=0}^{\infty}(\lambda_i^k)^2<\infty, \\
			 & \sum\nolimits_{k=0}^{\infty} \sum\nolimits_{i, j \in \mathcal{V}}|\lambda_i^k-\lambda_j^k|<\infty, \enspace \sum\nolimits_{k=0}^{\infty}(\sigma_i^k)^2<\infty, \\
			 & \sum\nolimits_{k=0}^{\infty} \gamma^k=\infty, \quad \sum\nolimits_{k=0}^{\infty}(\gamma^k)^2<\infty, \\
			 & \sum\nolimits_{k=0}^{\infty}\frac{(\lambda_i^k)^2}{\gamma^k}<\infty, \quad \sum\nolimits_{k=0}^{\infty}\frac{(\sigma_i^k)^2}{\gamma^k}<\infty.
		\end{aligned}
	\end{equation}
\end{theorem}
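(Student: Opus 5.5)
The plan is to combine Lemma \ref{lemma:first} and Lemma \ref{lemma2} into a single nonnegative supermartingale-type inequality and then invoke the Robbins--Siegmund theorem. Fix an optimal solution $\mathbf{x}^*$ and define, for $k\ge K$ (with $K$ from Lemma \ref{lemma2}, enlarged if necessary so that $\gamma^k\le 1/|\rho_n|$ and $\gamma^k\le 1/\mu$, which is possible since $\sum(\gamma^k)^2<\infty$ forces $\gamma^k\to0$), the Lyapunov function
\[
V^k=\|\bar{\mathbf{x}}^k-\mathbf{x}^*\|^2+\|\mathbf{x}^k-\mathbf{1}_n\otimes\bar{\mathbf{x}}^k\|^2 .
\]
Adding \eqref{part1} and \eqref{part2} gives
\[
\mathbb{E}[V^{k+1}\,|\,\mathcal{F}^k]\le (1+a^k)V^k+b^k-2\bar{\lambda}^k\big(F(\bar{\mathbf{x}}^k)-F(\mathbf{x}^*)\big)-c^k\|\mathbf{x}^k-\mathbf{1}_n\otimes\bar{\mathbf{x}}^k\|^2,
\]
where the consensus coefficient is $1-\mu\gamma^k+\frac{\gamma^k}{n}+\frac{8L^2\theta^k_{\max}}{\mu\gamma^k}+\frac{8dL^2\sum_i\theta_i^k}{n^2}$. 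To obtain a genuine negative term $c^k$ proportional to $\gamma^k$, I expect to need $\frac{1}{n}<\mu$. If this fails, I would instead weight the consensus part by a constant $\beta>1/(n\mu)$ in $V^k$ and redo the sum; only constants change. So I take $c^k=\kappa\gamma^k$ with $\kappa>0$.

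Next I verify summability of all coefficients using \eqref{lambdaxi}. Since $\theta_i^k=(\lambda_i^k)^2+(\sigma_i^k)^2$, the conditions $\sum(\lambda_i^k)^2/\gamma^k<\infty$ and $\sum(\sigma_i^k)^2/\gamma^k<\infty$ give $\sum_k\theta_{\max}^k/\gamma^k<\infty$ (a max over finitely many summable sequences). They also give $\sum\theta_i^k<\infty$ and $\sum(\bar\lambda^k)^2/\gamma^k<\infty$. The term $\|\boldsymbol{\lambda}^k-\bar\lambda^k\mathbf{1}_n\|$ is bounded by $\frac1n\sum_{i,j}|\lambda_i^k-\lambda_j^k|$, which is summable by assumption. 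Finally $\sum(\gamma^k)^2<\infty$ handles the quantization terms. Hence $\sum a^k<\infty$ and $\sum b^k<\infty$, with all quantities nonnegative. Convexity of $F$ makes $F(\bar{\mathbf{x}}^k)-F(\mathbf{x}^*)\ge0$.

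Applying Robbins--Siegmund yields three conclusions almost surely: $V^k$ converges, $\sum_k\bar\lambda^k(F(\bar{\mathbf{x}}^k)-F(\mathbf{x}^*))<\infty$, and $\sum_k\gamma^k\|\mathbf{x}^k-\mathbf{1}_n\otimes\bar{\mathbf{x}}^k\|^2<\infty$. Since $\sum\gamma^k=\infty$, the consensus error has $\liminf$ zero. To upgrade this to a full limit, I would note that the consensus part of $V$ separately satisfies the recursion of Lemma \ref{lemma2}, whose extra terms are summable because $\|\bar{\mathbf{x}}^k-\mathbf{x}^*\|$ is a.s.\ bounded. Robbins--Siegmund therefore applies again, so the consensus error converges, and with $\liminf=0$ it converges to $0$. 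Similarly $\sum\bar\lambda^k=\infty$, which follows from $\sum_k\lambda_i^k=\infty$, gives $\liminf_k F(\bar{\mathbf{x}}^k)=F(\mathbf{x}^*)$.

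The main obstacle is concluding convergence of $\bar{\mathbf{x}}^k$ to a single optimizer rather than just to the optimal set. I would use the standard argument from \cite{Nedic2009Multi-agent}-type analyses. The above holds for every $\mathbf{x}^*$ in the optimal set, and simultaneously on a probability-one event for a countable dense subset of optimizers. Boundedness of $\bar{\mathbf{x}}^k$ gives a subsequence along which $F(\bar{\mathbf{x}}^{k_\ell})\to F^*$, and this subsequence has a limit point $\hat{\mathbf{x}}$ that is optimal by continuity of $F$. Since $\|\bar{\mathbf{x}}^k-\mathbf{x}^*\|$ converges for all optimizers in the dense set, and by a density/continuity argument for $\hat{\mathbf{x}}$ itself, the limit of $\|\bar{\mathbf{x}}^k-\hat{\mathbf{x}}\|$ must equal its subsequential value $0$. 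Together with vanishing consensus error, this shows $\mathbf{x}_i^k\to\hat{\mathbf{x}}$ almost surely for every $i$.
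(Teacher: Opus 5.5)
Your proposal is correct and takes essentially the same route as the paper: the paper stacks \eqref{part1} and \eqref{part2} into a $2\times 2$ coupled recursion and invokes the vector supermartingale lemma it cites (Lemma 6 of Wang2024Tailor), which amounts to scalarizing with a positive weight vector under which the drift matrix with entries $1$, $\gamma^k/n$, $1-\mu\gamma^k$ is non-expansive, and that is exactly your weighted Lyapunov function with $\beta>1/(n\mu)$ fed into the scalar Robbins--Siegmund theorem. Your separate consensus upgrade and the countable-dense-subset argument for convergence to a single optimizer fill in steps the paper delegates to that citation; one small fix is that $F(\bar{\mathbf{x}}^k)-F(\mathbf{x}^*)\ge 0$ follows from optimality of $\mathbf{x}^*$, not from convexity.
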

\begin{pf}
	Combining the results of Lemmas \ref{lemma:first} and \ref{lemma2}, there exists a positive integer $K$ such that for all $k \ge K$, the following matrix inequality holds:
	\begin{equation}
		\begin{aligned}
			 & {\left[\begin{array}{c}
					\mathbb{E}[\|\bar{\mathbf{x}}^{k+1}-\mathbf{x}^*\|^2 \, | \, \mathcal{F}^k] \\
					\mathbb{E}[\|\mathbf{x}^{k+1}-\mathbf{1}_n \otimes \bar{\mathbf{x}}^{k+1}\|^2 \, | \, \mathcal{F}^k]
				\end{array}\right]} \\
			\leq & \left(\left[\begin{array}{cc}
				1 & \frac{\gamma^k}{n} \\
				0 & 1 - \mu\gamma^k
			\end{array}\right]+A^k \right)\left[\begin{array}{c}
				\|\bar{\mathbf{x}}^k-\mathbf{x}^*\|^2 \\
				\|\mathbf{x}^k-\mathbf{1}_n \otimes \bar{\mathbf{x}}^k\|^2
			\end{array}\right] \\
			 & +\mathbf{b}^k-2\bar{\lambda}^k\left[\begin{array}{c}
				F(\bar{\mathbf{x}}^k)-F(\mathbf{x}^*)\\ 
				0
			\end{array}\right], 
		\end{aligned}
	\end{equation}
	where $A^k=\left[\begin{array}{ll}a_{11}^k & a_{12}^k \\ a_{21}^k & a_{22}^k\end{array}\right]$ and $\mathbf{b}^k=\left[\begin{array}{l}b_1^k \\ b_2^k\end{array}\right]$ are defined by
	\begin{equation*}
		\begin{aligned}
			a_{11}^k= & \frac{nL^2(\bar{\lambda}^k)^2}{\gamma^k}+\frac{\sqrt{d}(2 nL^2+1)}{n}\|\boldsymbol{\lambda}^k-\bar{\lambda}^k \mathbf{1}_n\|\\
			 & +\frac{8dL^2\sum_{i=1}^n \theta_i^k}{n}, \\
			a_{12}^k = & \frac{8dL^2\sum_{i=1}^n \theta_i^k}{n^2}, \ \ a_{21}^k=\frac{8 n L^2\theta_{\max}^k}{\mu\gamma^k}, \ \
			a_{22}^k =\frac{8 L^2\theta_{\max}^k}{\mu\gamma^k}, 
		\end{aligned}
	\end{equation*}
	\begin{equation*}
		\begin{aligned}
			b_1^k = & \Big(\frac{4d\sum_{i=1}^n \theta_i^k}{n^2}+\frac{2\sqrt{d}}{n}\|\boldsymbol{\lambda}^k-\bar{\lambda}^k \mathbf{1}_n\|\Big)\sum\nolimits_{i=1}^n\|\nabla f_i(\mathbf{x}^*)\|^2\\
			 & +\frac{1}{2}(n-1)^2d\delta^2(\gamma^k)^2+\frac{2d\sum_{i=1}^n \theta_i^k}{n^2}\sigma_g^2, \\
			b_2^k= & \frac{4\theta_{\max}^k}{\mu\gamma^k} \sum\nolimits_{i=1}^n\|\nabla f_i(\mathbf{x}^*)\|^2+\frac{2n\theta_{\max}^k}{\mu\gamma^k}\sigma_g^2\\
			 & +\frac{1}{2}(n-1)^2d\delta^2(\gamma^k)^2.
		\end{aligned}
	\end{equation*}
	
	Define $a_{\max}^k=\max \{a_{11}^k, a_{12}^k, a_{21}^k, a_{22}^k\}$ and $b_{\max}^k = \max \{b_1^k,$ $b_2^k\}$. Then we have $A^k \leq a_{\max}^k \mathbf{1}_n\mathbf{1}_n^T$ and $\mathbf{b}^k \leq b_{\max}^k \mathbf{1}_n$. 
	
	Under the conditions in \eqref{lambdaxi}, it can be verified that $\sum_{k=0}^{\infty}a_{\max}^k<\infty$, $\sum_{k=0}^{\infty} b_{\max}^k<\infty$, and $\sum_{k=0}^{\infty} \bar{\lambda}^k=\infty$. Therefore, following Lemma 6 in \cite{Wang2024Tailor}, we have $\lim _{k \rightarrow \infty}\|\mathbf{x}_i^k-\bar{\mathbf{x}}^k\|=0$ holds almost surely for all agents and $\lim _{k \rightarrow \infty}\|\bar{\mathbf{x}}^k-\mathbf{x}^*\|=0$ holds almost surely. \hfill$\blacksquare$
\end{pf}

\subsection{Privacy Analysis}

\begin{definition}\label{def:eavesdropper}	
	An eavesdropper, denoted as $\mathcal{A}$, is an external attacker who knows the topology of the network and has the ability to intercept all exchanged messages in order to infer the private gradients of participating agents.
\end{definition}

\begin{definition}\label{def:privacy}
	For a distributed network with $n$ agents, the privacy of the gradient information $\mathbf{g}_i^k$ of agent $i$ is preserved if an external eavesdropper cannot infer the exact value of $\mathbf{g}_i^k$ at any iteration $k$.
\end{definition}

\begin{theorem}
	For a connected graph $\mathcal{G}=(\mathcal{V}, \mathcal{E})$, the proposed algorithm preserves the privacy of each agent $i$ against an external eavesdropper.
\end{theorem}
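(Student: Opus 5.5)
The plan is to read off exactly what the external eavesdropper $\mathcal{A}$ of Definition \ref{def:eavesdropper} observes, and then show that these observations cannot pin down $\mathbf{g}_i^k$ in the sense of Definition \ref{def:privacy}. By the protocol in Fig. \ref{Paillier encrypt}, every message on a link $(i,j)$ is a Paillier ciphertext. This includes the ciphertexts of $-Q(\mathbf{x}_i^k)$ under $k_{pi}$, and the returned ciphertexts $E_i\big(Q(w_{j\to i})Q(w_{i\to j})(Q(\mathbf{x}_j^k)-Q(\mathbf{x}_i^k))\big)$ obtained through the additive homomorphism. The only other messages are the public keys $k_{pi}$, exchanged at initialization.

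The first step is to use the semantic security (IND-CPA) of Paillier under the decisional composite residuosity assumption. Since $\mathcal{A}$ never holds any private key $k_{si}$, the ciphertext stream it collects is computationally indistinguishable from encryptions of arbitrary plaintexts. Concretely, for any two executions whose underlying plaintexts differ, the joint distributions of the intercepted messages are indistinguishable to a polynomial-time adversary. So $\mathcal{A}$ obtains no (non-negligible) information about $\mathbf{x}_i^k$, $\tilde{\mathbf{x}}_j^k$, or the interaction terms. Paillier also uses fresh randomness for each encryption, so equal plaintexts across iterations cannot be linked.

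The second step shows that, even granting $\mathcal{A}$ the plaintext states, the gradient is still not identifiable. From \eqref{x},
\begin{equation*}
\Lambda_i^k\mathbf{g}_i^k=\mathbf{x}_i^k-\mathbf{x}_i^{k+1}+\gamma^k\sum\nolimits_{j\in\mathcal{N}_i}\tilde{w}_{i\to j}\tilde{w}_{j\to i}(\tilde{\mathbf{x}}_j^k-\tilde{\mathbf{x}}_i^k).
\end{equation*}
The weights $w_{i\to j}$ and the random diagonal stepsize $\Lambda_i^k$ are private to agent $i$ and are never transmitted. Hence $\mathcal{A}$ can at best learn the products $\lambda_{il}^k g_{il}^k$. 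Any alternative pair $(\Lambda_i'^k,\mathbf{g}_i'^k)$ with $\lambda_{il}'^k g_{il}'^k=\lambda_{il}^k g_{il}^k$ and a different $\mathbf{g}_i'^k$ produces exactly the same observations. Assumption \ref{ass:step} allows $\sigma_i^k>0$, and $\lambda_{il}^k$ is random with a nondegenerate distribution, so such alternatives exist. Therefore the exact value of $\mathbf{g}_i^k$ cannot be inferred.

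The main obstacle is making the first step rigorous. One has to argue that the homomorphic operations performed by agent $j$ (multiplying by $Q(w_{j\to i})$ and adding its own encrypted state) reveal nothing beyond fresh ciphertexts under $k_{pi}$. I would handle this by noting that the resulting ciphertexts are valid encryptions produced only with public-key operations. They are therefore covered by IND-CPA, possibly with re-randomization by multiplying by $r^N$ to be safe. Finally, the argument uses no assumption on trusted neighbors, so it holds for every agent in any connected $\mathcal{G}$.
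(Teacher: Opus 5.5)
Your argument is essentially the paper's two-layer proof: Paillier hides states and weights, and even if the encryption is broken, the private random stepsize $\Lambda_i^k$ leaves $\mathbf{g}_i^k$ unidentifiable. Your IND-CPA step is a more formal version of the paper's first layer, and your product-ambiguity step matches the paper's closing remark, so the proof is correct without the paper's extra claim that the eavesdropper does not know $\gamma^k$.

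One small slip does not affect the conclusion. Agent $j$'s reply to $i$ encrypts only $Q(w_{j\to i})\big(Q(\mathbf{x}_j^k)-Q(\mathbf{x}_i^k)\big)$, and $w_{i\to j}$ is embedded in $i$'s own reply to $j$. An eavesdropper that breaks Paillier therefore does learn all coupling weights, as the paper grants, rather than $w_{i\to j}$ being ``never transmitted''. That worst case is exactly what your bound ``at best the products $\lambda_{il}^k g_{il}^k$'' already covers.
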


\begin{pf}
	We define the information $\mathcal{I}_{\mathcal{A}}$ accessible to the external eavesdropper $\mathcal{A}$ as:
	\begin{equation}\label{infor}
		\begin{aligned}
			 & \mathcal{I}_{\mathcal{A}} \triangleq \{ \mathcal{I}_{ij}^{\text {exc }}(k) \, | \, i, j \in \mathcal{V}, \, i\neq j, \, k \geq 0 \}, 
		\end{aligned}
	\end{equation}
	where $\mathcal{I}_{ij}^{\text {exc }}(k)$ is given by:
	\begin{equation*}\label{information}
		\begin{aligned}
			  \mathcal{I}_{ij}^{\text {exc }}(k) 
			 =  \big\{ \mathbb{E}\big[Q(w_{j \rightarrow i})\big(Q(\mathbf{x}_j^k)-Q(\mathbf{x}_i^k)\big)\big], \mathbb{E}[Q(-\mathbf{x}_i^k)] \big\}.
		\end{aligned}
	\end{equation*}
	From $\mathcal{I}_{\mathcal{A}}$, we observe that the eavesdropper cannot infer agent $i$'s gradient $\mathbf{g}_i^k$, because all exchanged messages are encrypted using the Paillier cryptosystem, which hides agent $i$'s state $\mathbf{x}_i^k$ and coupling weights $w_{i \rightarrow j}$.

	Even if the eavesdropper breaks the Paillier encryption, reducing $\mathcal{I}_{ij}^{\text {exc }}(k)$ to the plaintext form
	\begin{equation}
		\begin{aligned}
			 \tilde{\mathcal{I}}_{ij}^{\text {exc }}(k) = \{w_{j \rightarrow i}(\mathbf{x}_j^k-\mathbf{x}_i^k), \, \mathbf{x}_i^k \}, 
		\end{aligned}
	\end{equation}
 where we ignored the quantization error introduced by Paillier encryption. However, even with access to all states $\mathbf{x}_i^k$ and all coupling weights $w_{i \rightarrow j}$, the external eavesdropper $\mathcal{A}$ cannot infer the gradient $\mathbf{g}_i^k$ of agent $i$ by \eqref{x} and \eqref{infor} because it lacks knowledge of the attenuation factor $\gamma^k$ and the stepsize $\Lambda _i^k$. Consequently, the proposed algorithm preserves the privacy of participating agents against the external eavesdropper. \hfill$\blacksquare$
\end{pf}

\begin{remark}
	Following an analogous reasoning, the proposed algorithm also protects agents' privacy against internal honest-but-curious adversaries (individually or collusively), external eavesdroppers, or a coalition of both. Even if the Paillier encryption is broken and all exchanged information is revealed, the attacker only obtains the product $\Lambda_i^k \mathbf{g}_i^k$ and cannot recover the true gradient $\mathbf{g}_i^k$. This is because the heterogeneous and time-varying stepsizes allow each agent $i$ to perform strategic scaling $\{ \tilde{\Lambda}_i^k, \tilde{\mathbf{g}}_i^k \} = \{ e^{-\zeta^k} \Lambda_i^k, e^{\zeta^k} \mathbf{g}_i^k \}$ for any $0<\zeta^k<\infty$, such that the observed product remains unchanged while the true gradient is obfuscated.
\end{remark}

\section{Numerical Simulations}
We consider a distributed estimation problem where each sensor $i$ collects $n_i$ noisy measurements $\mathbf{z}_{ij} = M_i \mathbf{x} + \boldsymbol{\nu}_{ij}$ ($j = 1, \ldots, n_i$), with $M_i \in \mathbb{R}^{p \times d}$ and $\boldsymbol{\nu}_{ij}$ denoting the measurement matrix and associated noise, respectively. The unknown parameter $\mathbf{x}$ is estimated via empirical risk minimization, with local cost $f_i(\mathbf{x}) = \frac{1}{n_i} \sum_{j=1}^{n_i} \| \mathbf{z}_{ij} - M_i \mathbf{x} \|^2 + \omega \| \mathbf{x} \|^2$, where $\omega$ denotes the regularization parameter.

\begin{figure}[h]
	\centering
	\includegraphics[width=0.32\columnwidth]{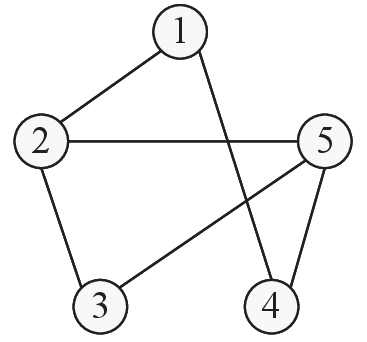}
	\caption{Undirected topology of the network.}
	\label{undirected topology}
\end{figure}

We considered a network of five agents with the undirected topology shown in Fig. \ref{undirected topology}. The parameters were set as $p=3$, $d=2$, and $n_i=50$ for each agent $i$. The measurement noise $\boldsymbol{\nu}_{ij}$ followed a uniform distribution on $[-0.5, 0.5]$. In our proposed algorithm, we set the quantization precision $\delta = 0.1$ and the attenuation factor $\gamma^k = \frac{1}{1+0.1k^{0.81}}$. The stepsize matrix for agent $i$ at iteration $k$ was set as $\Lambda_i^k = \operatorname{diag}\{\lambda_{i1}^k, \ldots, \lambda_{id}^k\}$, where each diagonal entry was generated as $\lambda_{il}^k = \frac{0.005}{k^{0.6}}\big(1 + \frac{\zeta_{il}^k}{k^{1.2}}\big)$ for $l=\{1, \ldots, d\}$, with $\{\zeta_{il}^k\}$ being independent uniform random variables on $[0, 1]$. This configuration satisfies the conditions in \eqref{lambdaxi}.
 
As a performance metric, we tracked the mean squared error $\sum_{i=1}^{n} \| \mathbf{x}_i^k - \mathbf{x}^* \|^2$ and its variance over 1000 trials. The results are shown in Fig. \ref{errorbar}. For comparison, we also implemented a variant of our algorithm in \eqref{x} without the attenuation factor $\gamma^k$ (i.e., $\gamma^k=1$) and the conventional distributed stochastic optimization algorithm in \eqref{eq:SGD} (no encryption or quantization). The red solid, blue dashed, and purple dash-dotted curves in Fig. \ref{errorbar} show the average optimization error for the three methods, respectively. Evidently, the convergence rate attained by our algorithm is on par with that of the conventional algorithm, although with a slight, relatively limited, degradation in final accuracy. In contrast, the version without the attenuation factor exhibits poorer accuracy, demonstrating the importance of $\gamma^k$ in suppressing quantization error.

\begin{figure}[h]
	\centering
	\includegraphics[width=1\columnwidth]{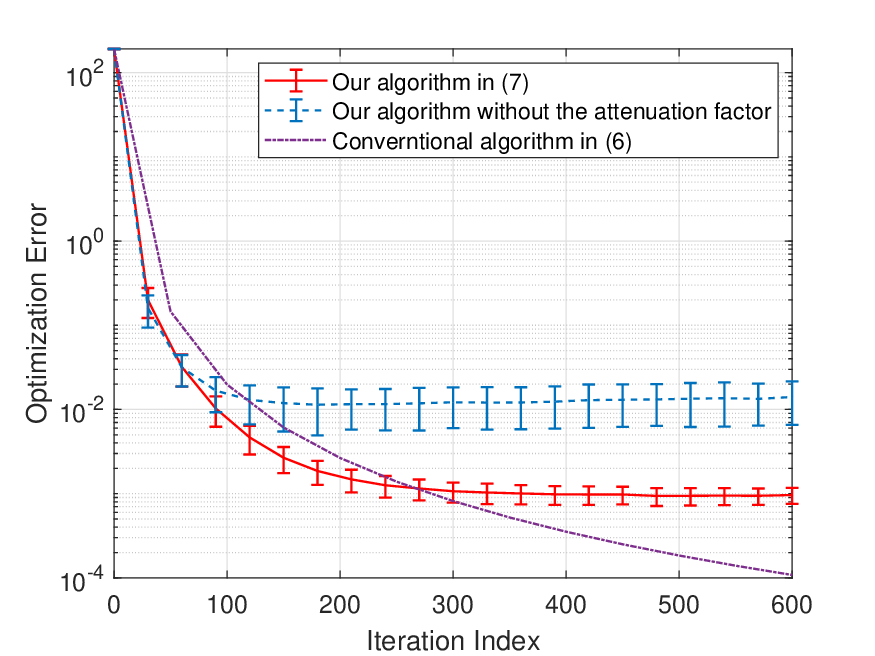}
	\caption{Comparison of our algorithm with conventional distributed stochastic optimization algorithm.}
	\label{errorbar}
\end{figure}

\section{Conclusions}
We propose a distributed stochastic optimization algorithm designed to preserve privacy, utilizing Paillier homomorphic encryption in conjunction with heterogeneous, time-varying step sizes. The encryption ensures confidentiality against eavesdroppers, while the heterogeneous stepsizes mask true gradients even if encryption is compromised. To suppress quantization error introduced by Paillier encryption, an attenuation factor is incorporated in the algorithm, ensuring almost sure convergence without sacrificing privacy. The algorithm requires no trusted neighbors and preserves privacy even when all neighbors collude with external eavesdroppers. Theoretical analysis and numerical simulations validate both the convergence and privacy-preserving properties of our algorithm.

\bibliography{reference}

\end{document}